\title{Ideal Membership Problem for Boolean Minority}
\renewcommand\@date{{%
  \vspace{-\baselineskip}%
  \large\centering
  \begin{tabular}{@{}c@{}}
    Arpitha P. Bharathi\textsuperscript{1} \\
    \normalsize arpitha@idsia.ch
  \end{tabular}%
  \quad and\quad
  \begin{tabular}{@{}c@{}}
    Monaldo Mastrolilli\textsuperscript{2} \\
    \normalsize monaldo@idsia.ch
  \end{tabular}

  \bigskip
  
  \textsuperscript{1,}\textsuperscript{2}IDSIA-SUPSI, Switzerland

%   \textsuperscript{1}Some Department, Some University\par
%   \textsuperscript{2}Some Department, Some University

  \bigskip

  %\today
}}
\newcommand\mydef{\mathrel{\overset{\makebox[0pt]{\mbox{\normalfont\tiny\sffamily def}}}{=}}}
\newcommand{\Mod}[1]{\ (\mathrm{mod}\ #1)}
\newcommand{\sos}{\text{\sc{SoS}}}
\newcommand{\Cc}{\mathcal{C}}
\newcommand{\Ideal}[1]{{\textbf{I}}\left( #1 \right)}
\newcommand{\GIdeal}[1]{\left\langle #1 \right\rangle}
\newcommand{\CSP}{\textsc{CSP}}
\newcommand{\IMP}{\textsc{IMP}}
\newcommand{\Minority}{\textsf{Minority }}
\newcommand{\Minorityns}{\textsf{Minority}}
\newcommand{\lex}{\textsf{lex }}
\newcommand{\grlex}{\textsf{grlex }}
\newcommand{\Variety}[1]{{\textbf{V}}\left( #1 \right)}
\newcommand{\spn}[1]{\left\langle #1 \right\rangle}
\newcommand{\x}{\mathbf{x}}
\newcommand{\I}{\emph{\texttt{I}}}
\newcommand{\Zz}{\mathbb{Z}}
\newcommand{\N}{\mathbb{N}}
\newcommand{\multideg}{\textnormal{multideg}}
\newcommand{\LM}{\textnormal{LM}}
\newcommand{\LT}{\textnormal{LT}}
\newcommand{\LC}{\textnormal{LC}}
\newcommand{\LCM}{\textnormal{lcm}}
\newcommand{\GB}{\text{Gr\"{o}bner} }
\newcommand{\reduce}[2]{{#1}|_{#2}}
\newcommand{\Field}{\mathbb{F}}
\newcommand{\Real}{\mathbb{R}}
\newcommand{\vb}[1]{\mathbf{#1}}
\newtheorem{theorem}{Theorem}[section]
\newtheorem{lemma}[theorem]{Lemma}
\newtheorem{corollary}[theorem]{Corollary}
\newtheorem{proposition}[theorem]{Proposition}
\newtheorem{example}{Example}[section]
\newtheorem{definition}{Definition}[section]
\begin{document}

\maketitle

%\section{Introduction}
%The polynomial \textsc{Ideal Membership Problem} (\IMP) is the following computational task. Let $\Field[x_1, \ldots, x_n]$ be the ring of polynomials over a field $\Field$ and indeterminates $x_1,\ldots, x_n$ (for the applications of this paper $\Field=\Real$).Given $f_0,f_1,\ldots,f_r\in \Field[x_1, \ldots, x_n]$ we want to decide if $f_0\in \I= \GIdeal{f_1,\ldots, f_r}$, where $\I$ is the ideal generated by $F=\{f_1,\ldots , f_r\}$.

%\linenumbers
% 

\begin{abstract}
The Ideal Membership Problem (IMP) tests if an input polynomial $f\in \Field[x_1,\dots,x_n]$ with coefficients from a field $\Field$ belongs to a given ideal $I \subseteq \Field[x_1,\dots,x_n]$. It is a well-known fundamental problem with many important applications, though notoriously intractable in the general case. In this paper we consider the IMP for polynomial ideals encoding combinatorial problems and where the input polynomial $f$ has degree at most $d=O(1)$ (we call this problem IMP$_d$).  
  
A dichotomy result between ``hard'' (NP-hard) and ``easy'' (polynomial time) $\IMP$s was recently achieved for Constraint Satisfaction Problems over finite domains~\cite{Bulatov17,Zhuk17} (this is equivalent to  $\IMP_0$) and $\IMP_d$ for the Boolean domain~\cite{MonaldoMastrolilli2019}, both based on the classification of the $\IMP$ through functions called polymorphisms. For the latter result, there are only six polymorphisms to be studied in order to achieve a full dichotomy result for the $\IMP_d$. The complexity of the $\IMP_d$ for five of these polymorphisms has been solved in \cite{MonaldoMastrolilli2019} whereas for the ternary minority polymorphism it was incorrectly declared in \cite{MonaldoMastrolilli2019} to have been resolved by a previous result. As a matter of fact the complexity of the $\IMP_d$ for the ternary minority polymorphism is open.

In this paper we provide the missing link by proving that the IMP$_d$ for Boolean combinatorial ideals whose constraints are closed under the minority polymorphism can be solved in polynomial time. 

This is achieved by first showing that a \GB basis can be efficiently computed in the lexicographic order for these ideals. Since this is insufficient for the efficient solvability of the $\IMP_d$, we show how this \GB basis can be converted to a $d$-truncated \GB basis in graded lexicographic order in polynomial time which ensures the achievement of the result. This result, along with the results in \cite{MonaldoMastrolilli2019}, completes the identification of the precise borderline of tractability for the $\IMP_d$ for constrained problems over the Boolean domain.  

This paper is motivated by the pursuit of understanding the recently raised issue of bit complexity of Sum-of-Squares proofs raised by O'Donnell~\cite{odonnell2017}. Raghavendra and Weitz \cite{raghavendra_weitz2017} show
how the $\IMP_d$ tractability for combinatorial ideals implies bounded coefficients in Sum-of-Squares proofs.
\end{abstract}
\section{Introduction}
A polynomial ideal is a subset of the polynomial ring $ \mathbb{F}[x_1,\dots,x_n]$ with two properties: for any two polynomials $f,g$ in the ideal, $f+g$ also belongs to the ideal and so does $hf$ for any polynomial $h$. The Hilbert Basis Theorem \cite{HilbertBasisTheorem} states that every ideal $I$ is finitely generated by a set $F=\{f_1,\dots,f_m\}\subset I$, i.e., any polynomial in $I$ is a polynomial combination of elements from $F$. The polynomial Ideal Membership Problem (IMP) is to find out if a polynomial $f$ belongs to an ideal $I$ or not, given a set of generators of the ideal. 
%We call such problems where the input polynomial $f$ has degree at most $d=O(1)$ as IMP$_d$. 
This fundamental algebraic complexity problem was first pioneered by David Hilbert \cite{Hilbert1893} and has important applications in solving polynomial systems and polynomial identity testing \cite{Cox,PIT}. The IMP is, in general, EXPSPACE-complete and Mayr and Meyer show that the problem for multivariate polynomials over the rationals is solvable in exponential space \cite{Mayr1989,MAYR1982305}. The IMP is intractable (can be decided in single exponential time \cite{Dickenstein1991TheMP}) even when the ideal in question is zero-dimensional (number of common zeros of generators is finite). 

%The $\IMP_d$ is efficiently solvable in certain cases, especially when the generating set can provide ``low''-degree proofs of membership, i.e., when a generating set $F=\{f_1,\dots,f_m\}$ is  such that every polynomial $f\in I$ can be expressed as $\sum_{i=1}^m h_if_i$ such that $deg(h_if_i) \leq k\cdot deg(f)$ for every $i$ where $deg$ denotes degree of a polynomial, and $k$ is ``small'' enough. Such a set $F$ is said to be \emph{$k$-effective}. In general, an Effective Nullstellensatz of $deg(h_if_i) \leq O(mnD^\mu)$ \cite{EffNull} is unavoidable, where $D$ is the maximum degree of all the polynomials in $F$ and $\mu$ is the smallest among $m$ and $n$.

The \emph{vanishing ideal} of a set $S\subseteq \mathbb{F}^n$ is the set of all polynomials in $\mathbb{F}[x_1,\dots,x_n]$ that vanish at every point of $S$. This set of polynomials forms an ideal.
In this paper we consider vanishing ideals of the sets $S$ of feasible solutions that arise from Boolean combinatorial optimization problems. The vanishing ideal of the solution space $S$ is defined as its \emph{combinatorial ideal}.
We consider the IMP for polynomial ideals encoding combinatorial problems.  We call such problems where the input polynomial $f$ has degree at most $d=O(1)$ as IMP$_d$.
The polynomial ideals that arise from combinatorial optimization problems frequently have special properties: these ideals are finite domain and therefore zero-dimensional and radical.
The question of identifying problem restrictions which are sufficient to ensure the $\IMP_d$ tractability is important from both a practical and a theoretical viewpoint, and has an immediate application to Sum-of-Squares ($\sos$) proof systems (or Lasserre relaxations) as explained in the following. 

The $\sos$ proof system is an increasingly popular tool to solve combinatorial optimization problems. Especially over the last few decades, $\sos$ has had several applications in continuous and discrete optimization (see, e.g.,~\cite{Laurent2009}). 
It was generally believed that a degree $d$ $\sos$ proof could be computed (if one existed) via the Ellipsoid algorithm in $n^{O(d)}$ time.
O'Donnell~\cite{odonnell2017}, who initially also believed this, gave a counterexample: a polynomial system and a polynomial which had degree two proofs of non-negativity with coefficients of exponential bit-complexity that forced the Ellipsoid algorithm to take exponential time. O'Donnell~\cite{odonnell2017} raised the open problem to establish useful conditions under which ``small'' $\sos$ proof can be guaranteed automatically.
A first elegant approach to this question is due to Raghavendra and Weitz~\cite{raghavendra_weitz2017} by providing a \emph{sufficient} condition on a polynomial system  that implies bounded coefficients in $\sos$ proofs.
In particular, the work of Raghavendra and Weitz \cite{raghavendra_weitz2017} shows
that the $\IMP_d$ tractability for combinatorial ideals implies polynomially bounded coefficients in $\sos$ proofs. Therefore, the $\IMP_d$ tractability yields to degree $d$ $\sos$ proof (if one exists) computation via the Ellipsoid algorithm in $n^{O(d)}$ time.
%
%Raghavendra and Weitz \cite{raghavendra_weitz2017} proved that if a combinatorial ideal had a generator set $F$ that was $k$-effective for a constant $k$, then any polynomial $p$ that admits a degree $d$-SoS mod $F$ \footnote{Polynomial $p$ can be written as $p=\sigma+ \rho$ where $\sigma$ is a sum of squares polynomial, $\rho$ is a polynomial combination of the generators in $F$ and degree of $\sigma,\rho$ does not exceed $2d$.} proof of non-negativity over the solution space of the combinatorial problem, has a degree $k\cdot d$-SoS mod $F$ proof of non-negativity with coefficients having polynomial bit complexity. 
Hence the following question poses itself: 
%\begin{tcolorbox}[colframe=white]\label{recipe}
%checkthis
%\begin{question}\label{question}
Which \emph{restrictions} on combinatorial problems can guarantee an efficient computation of the $\IMP_d$?

In this paper we consider restrictions on the so-called \emph{constraint language}, namely a set of relations that is used to form the constraints of the considered combinatorial optimization problem. Each constraint language $\Gamma$ gives rise to a particular polynomial ideal membership problem, denoted $\IMP_d(\Gamma)$, and the goal is to describe the complexity of the $\IMP_d(\Gamma)$ for all constraint languages~$\Gamma$.
This kind of restrictions on the constraint languages have been successfully applied to study the computational complexity classification (and other algorithmic properties) of the decision version of Constraint Satisfaction Problems ($\CSP$) over a fixed constraint language $\Gamma$ on a finite domain, denoted $\CSP(\Gamma)$ (see \cref{sec:preliminaries}).
%This restricted framework is still broad enough to include many decision problems from the class NP, yet it is narrow enough to potentially allow for complete classifications of all such $\CSP$s.
This classification started with the classic dichotomy  result of Schaefer \cite{Schaefer78} for 0/1 $\CSP$s, and culminated with the recent papers by Bulatov~\cite{Bulatov17} and Zhuk~\cite{Zhuk17}, settling the long-standing Feder-Vardi dichotomy conjecture for finite domain $\CSP$s. We refer to~\cite{2017dfu7} for an excellent survey. Note that $\CSP(\Gamma)$ corresponds to the very special case of the $\IMP_d(\Gamma)$ with $d=0$, i.e. where we are only interested in testing if the constant polynomial ``$1$'' belongs to the combinatorial ideal (see \cref{sect:idealCSP} for more details on Ideal-CSP correspondence). In this paper we are interested in the problem with $d\geq 1$.
%
%Hence the following question poses itself: 
%\begin{tcolorbox}[colframe=white]\label{recipe}
%checkthis
%\begin{question}\label{question}
%Which combinatorial problems can guarantee an efficient computation of $O(1)$-effective generating sets?
%\end{question}
%\end{tcolorbox}
%Such restrictions may either consider the \emph{structure} of the constraints, namely which variables may be constrained by which other variables, or they may involve the \emph{nature} of the constraints, in other words, which combination of values are permitted for variables that are mutually constrained.
%

%In this paper we take the second approach by restricting the so-called \emph{constraint language} (see \cref{sec:preliminaries}), namely a set of relations that is used to form constraints. 
Mastrolilli \cite{MonaldoMastrolilli2019} recently claimed a dichotomy result for the $\IMP_d(\Gamma)$ that fully answers the above question for 0/1 combinatorial problems: for any constant $d\geq 1$, the $\IMP_d(\Gamma)$ of Boolean combinatorial ideals is either decidable in polynomial time or it is NP-complete. 
Note that the solvability of CSP($\Gamma$) (and therefore of the $\IMP_0(\Gamma)$) in the Boolean domain is known to admit a nice dichotomy result~\cite{Schaefer78}: it is solvable in polynomial time if all constraints are closed under one of six polymorphisms (majority, minority, MIN, MAX, constant 0 and constant 1), else it is NP-complete. In \cite{MonaldoMastrolilli2019} it is claimed that the IMP$_d(\Gamma)$ for the Boolean domain also has a nice dichotomy result: it is solvable in polynomial time if all constraints are closed under one of four polymorphisms (majority, minority, MIN, MAX), else it is NP-complete. The complexity of the $\IMP_d(\Gamma)$ for five of these polymorphisms has been solved in \cite{MonaldoMastrolilli2019} whereas for the ternary minority polymorphism it was incorrectly declared in~\cite{MonaldoMastrolilli2019} to have been resolved by a previous result. As a matter of fact the complexity of the $\IMP_d(\Gamma)$ for the ternary minority polymorphism is open.

In this paper we solve this issue by providing the missing link and therefore establishing the full dichotomy result claimed in \cite{MonaldoMastrolilli2019}. To ensure efficiency of the $\IMP_d$, it is sufficient to compute a $d$-truncated \GB basis in the graded lexicographic order (see \cref{def:dTruncated GB,sec:preliminaries,sect:background} for definitions and more details).
%
%It was claimed in \cite{MonaldoMastrolilli2019} that the case of one of the polymorphisms, the ternary minority polymorphism, is ``the only known case to admit an efficiently computable \GB basis''. 
This is achieved by first showing that a \GB basis can be efficiently computed in the lexicographic order for the minority polymorphism. Since this is insufficient for the efficient solvability of the $\IMP_d$, we show how this \GB basis can be converted to a $d$-truncated \GB basis in the graded lexicographic order in polynomial time. This efficiently solves the IMP$_d$ for combinatorial ideals whose constraints are over a language closed under the minority polymorphism. Together with the results in~\cite{MonaldoMastrolilli2019}, our result allows to complete the answer of the aforementioned question by allowing to identify the precise borderline of tractability of the $\IMP_d(\Gamma)$. 

Moreover, we believe the techniques described in this paper can be generalized for a finite domain with prime $p$ elements. The basis of this claim comes from the fact that constraints that are linear equations (mod $p$) are associated with an affine polymorphism \cite{Jeavons_UnifyingFramework1995}. We claim that the $\IMP_d$ is tractable for problems that are constrained as linear equations (mod $p$). The details are currently being worked out and will soon be updated in the full version of this paper. 
This is a first step towards the long term and challenging goal of generalizing the dichotomy results of solvability of the $\IMP_d$ for finite domains.

%This would be an important result as we are one step closer to identifying the borderline of tractability, if it exists, for the general $\IMP_d$.

\textbf{Structure of the paper:} \cref{sec:preliminaries} contains the basic definitions required for this paper, although a reader unfamiliar with $\CSP$s over a constraint language or algebraic geometry and \GB bases is strongly recommended to read the standard literature ~\cite{Chen09,Cox} or \cref{sect:background}.
%
%However, in order to make this article as self-contained as possible and accessible to non-expert readers,
%\cref{sect:background} provides the essential context needed with the adopted notation, which is included in the full version of this paper \textcolor{red}{CITE ARXIV}. We recommend the non-expert reader to start with that section or refer to~\cite{2017dfu7,Cox}.

We concretely state our results in \cref{sec:Our Contributions}. In \cref{sec:GBlex} 
we show that the reduced \GB basis in lexicographic order can be efficiently computed for combinatorial problems constrained under the minority polymorphism.
This is achieved in \cref{sec:GBlex} by first computing a \GB basis in modular arithmetic and then transforming it into a \GB basis $G_1$ in regular arithmetic. However, this \GB basis is in the lexicographic monomial ordering, and does not guarantee the efficient solvability of the $\IMP_d$. 
In \cref{sec:conversion} we show how to convert $G_1$ to a $d$-truncated \GB basis $G_2$ in graded lexicographic monomial ordering. We prove that this conversion can be obtained in polynomial time for any fixed $d=O(1)$. A simple example is provided in \cref{sec:example}.

\subsection{Preliminaries}\label{sec:preliminaries} 
%Throughout this paper we assume that the reader has some basic knowledge of \GB bases.
%We use notation and basic properties as in standard textbooks (see~\cite{Cox}). 
%

%\begin{definition}\label{def:constraint_language}
Let $D$ denote a finite set (\emph{domain}).
By a $k$-ary \textbf{\emph{relation}} $R$ on a domain $D$ we mean a subset of the $k$-th cartesian power $D^k$; $k$ is said to be the \emph{arity} of the relation. We often use relations and (affine) varieties interchangeably since both essentially represent a set of solutions. A \textbf{\emph{constraint language}} $\Gamma$ over $D$ is a set of relations over $D$. A constraint language is \textbf{\emph{finite}} if it contains finitely many relations, and is \emph{Boolean} if it is over the two-element domain $\{0,1\}$. In this paper, $D$ is the Boolean domain.
%\end{definition}

%\begin{definition}\label{def:constr_over_language}
  A \emph{\textbf{constraint}} over a constraint language $\Gamma$ is an expression of the form $R(x_1,\ldots, x_k)$ where $R$ is a relation of arity $k$ contained in $\Gamma$, and the $x_i$ are variables. A constraint is satisfied by a mapping $\phi$ defined on the $x_i$ if $(\phi(x_1),\ldots, \phi(x_k))\in R$.
%\end{definition}

\begin{definition}\label{def:csp}
 The \emph{(nonuniform) \textsc{Constraint Satisfaction Problem} ($\CSP$)} associated with language $\Gamma$ over $D$ is the problem $\CSP(\Gamma)$ in which: an instance is a triple $ \Cc=(X,D,C)$ where $X=\{x_1,\ldots,x_n\}$ is a set of $n$ variables and $C$  is a set of constraints over $\Gamma$ with variables from $X$. The goal is to decide whether or not there exists a solution, i.e. a mapping $\phi: X\rightarrow D$ satisfying all of the constraints. We will use $Sol( \Cc)$ to denote the set of solutions of $ \Cc$.
\end{definition}
Moreover, we follow the algebraic approach to Schaefer's dichotomy result \cite{Schaefer78}  formulated by Jeavons \cite{JEAVONS1998185} where each class of CSPs that are polynomial time solvable is associated with a polymorphism.
\begin{definition}\label{def:polymorph}
An operation $f:D^m \rightarrow D$ is a \textbf{polymorphism} of a relation $R\subseteq D^k$ if for any choice of $m$ tuples from $R$ (allowing repetitions), it holds that the tuple obtained from these $m$ tuples by applying $f$ coordinate-wise is in $R$. If this is the case we also say that $f$ \emph{preserves} $R$, or that $R$ is \emph{invariant} or \emph{closed} with respect to $f$. A polymorphism of a constraint language $\Gamma$ is an operation that is a polymorphism of every $R\in \Gamma$.
\end{definition}

%\cref{sect:csp_intro} gives a brief introduction to $\CSP$ over a constraint language with its algebra of polymorphisms. We recommend the non-expert reader to start with that section. 

In this paper we deal with the \emph{minority} polymorphism: 
\begin{definition}\label{def:minority}
For a finite domain $D$, a ternary operation $f$ is called a \emph{minority} polymorphism (denoted as \Minorityns) if  $f(a,a,b)=f(a,b,a)=f(b,a,a)=b$ for all $a,b\in D$.  
\end{definition}
Note that there is only one minority polymorphism (\Minority in short) for the Boolean domain. 

\begin{example}
Consider relations $R_1=\{(0,0,1),(1,0,0),(0,1,1),(1,1,0)\}$ and $R_2=\{(1,1),(0,1)\}$ associated with language $\Gamma$ over $D=\{0,1\}$. Observe that both $R_1$ and $R_2$ are closed under \Minorityns. Consider the instance $(X=\{x,y,z\},D,C=\{C_1,C_2\})$ where constraint $C_1 = R_1 (x,y,z)$ and $C_2= R_2(x,z)$. The assignment $\phi$ where $\phi(x)=0, \phi(y)= 0, \phi(z)=1$ is a solution to this instance of CSP($\Gamma$). 
\end{example}

For a given instance $\mathcal{C}$ of CSP($\Gamma$), the \textbf{\emph{combinatorial ideal}} $\Ideal{Sol( \Cc)}$ is defined as the vanishing ideal of set $Sol( \Cc)$, (see~\cref{def:ideal} in \cref{sect:background}). We call polynomials of the form $x_i(x_i-1)$ \textit{\textbf{domain polynomials}}, denoted by $dom(x_i)$, and it is easy to see that they belong to $\Ideal{Sol( \Cc)}$ for every $i\in [n]$ as they describe the fact that $Sol( \Cc) \subseteq D^n$. For a more detailed Ideal-CSP correspondence we refer to \cref{sect:idealCSP}.

\begin{definition}\label{def:IMP}
 The {\emph{\textsc{Ideal Membership Problem}}} associated with language $\Gamma$ is the problem $\IMP(\Gamma)$ in which
 the input consists of a polynomial $f\in \Field[X]$ and a $\CSP(\Gamma)$ instance $ \Cc=(X,D,C)$. The goal is to decide whether $f$ lies in the combinatorial ideal~$\Ideal{Sol( \Cc)}$. We use $\IMP_d(\Gamma)$ to denote $\IMP(\Gamma)$ when the input polynomial $f$ has degree at most $d$.
\end{definition}
The \GB basis $G$ of an ideal is a set of generators such that $f\in \GIdeal{G} \iff \reduce{f}{G}=0$, where $\reduce{f}{G}$ denotes the remainder of $f$ divided by $G$  (see~\cite{Cox} or \cref{sect:GBbasics} for more details and notations).
\begin{definition}\label{def:dTruncated GB}
  If $G$ is a \GB basis of an ideal, the \textbf{d-truncated \GB basis} $G'$ of $G$ is defined as
  $$G' = G \cap \mathbb{F}[x_1,x_1,\dots,x_n]_d,$$ where $\mathbb{F}[x_1,x_1,\dots,x_n]_d$ is the set of polynomials of degree less than or equal to $d$.
\end{definition}
It is not necessary to compute  a \GB basis of $\Ideal{Sol(\Cc)}$ in its entirety to solve the $\IMP_d$. Since the input polynomial $f$ has degree $d=O(1)$, the only polynomials from $G$ that can possibly divide $f$, in the graded lexicographic order (see \cref{def:lex and grlex} in \cref{sect:GBbasics}), are those that are in $G'$. The remainders of such divisions are also in $\mathbb{F}[x_1,x_1,\dots,x_n]_d$. Therefore, by \cref{th:gbprop,th:imp}, the membership test can be computed by
using only polynomials from $G'$ and therefore we have
$$f \in \Ideal{Sol(\Cc)} \cap \mathbb{F}[x_1,x_1,\dots,x_n]_d \iff \reduce{f}{G'}=0.$$
From the previous observations it follows that if we can compute $G'$ in $n^{O(d)}$ then this yields an algorithm that runs in $n^{O(d)}$ time for the $\IMP_d$ (note that the size of the input polynomial $f$ is bounded by $n^{
O(d)}$).

\subsection{Our contributions}\label{sec:Our Contributions}

In this paper we focus on instances $\mathcal{C}=(X=\{x_1,\dots,x_n\},D=\{0,1\},C)$ of CSP($\Gamma$) (see \cref{def:csp}) where $\Gamma$ is a language  that is closed under \Minorityns\  (see \cref{def:minority}). We first produce the reduced \GB basis $G_1$ of $\vb{I}(Sol(\Cc))$ according to the lexicographic order. Note that this \GB basis does not guarantee finding a solution to the $\IMP_d(\Gamma)$ in polynomial time.  
In \cref{sec:conversion} we show how to convert $G_1$ to a $d$-truncated \GB basis $G_2$ for a graded lexicographic monomial ordering. We prove that this computation can be obtained in polynomial time for any fixed $d=O(1)$. As pointed out at the end of \cref{sec:preliminaries}, an efficient computation of $G_2$ yields an efficient algorithm for the $\IMP_d$. A simple example is provided in \cref{sec:example}.
Thus we have the following main results:

\begin{theorem}\label{thm:MainTheorem}
The $d$-truncated reduced \GB basis of a Boolean combinatorial ideal whose constraints are closed under the minority polymorphism can be computed in $n^{O(d)}$ time, assuming the graded lexicographic ordering of monomials.
\end{theorem}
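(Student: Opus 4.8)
The plan is to exploit the special structure that the minority polymorphism imposes on the solution set. A Boolean relation closed under the minority operation $f(a,b,c)=a+b+c\pmod 2$ is exactly an affine subspace of $\mathbb{F}_2^k$, i.e.\ the solution set of a system of linear equations over $\mathbb{F}_2$. Hence for an instance $\mathcal{C}$ of $\CSP(\Gamma)$ with $\Gamma$ closed under \Minorityns, $Sol(\mathcal{C})$ is an affine subspace $V\subseteq\mathbb{F}_2^n$, presented (after Gaussian elimination, which is polynomial time) as the solution set of a system $Ax=b$ over $\mathbb{F}_2$. The first step, carried out in \cref{sec:GBlex}, is to write down a \GB basis of the vanishing ideal of $V$ directly. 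Over the field $\mathbb{F}_2$ this is classical: putting $A$ in reduced row echelon form with pivot variables $x_{i_1},\dots,x_{i_r}$, each equation gives a binomial-type generator $x_{i_j}+\big(\sum_{k\notin\text{pivots}} c_{jk}x_k\big)+b_j$, and together with the domain polynomials $x_i^2-x_i$ these form a \GB basis in the lexicographic order (with the pivot variables largest); one then argues that interpreting these polynomials over $\mathbb{Q}$ (or the working field $\mathbb{F}$) instead of $\mathbb{F}_2$, after suitably correcting coefficients so the variety is preserved, still yields a \GB basis $G_1$ for $\vb{I}(Sol(\mathcal{C}))$ in the lexicographic order. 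The key point for this theorem is that $G_1$ has polynomial size and is computed in polynomial time.

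The second and harder step, carried out in \cref{sec:conversion}, is the change of monomial order: $G_1$ is a lexicographic \GB basis, but the $\IMP_d$ algorithm sketched at the end of \cref{sec:preliminaries} needs a $d$-truncated \GB basis $G_2$ in the \emph{graded} lexicographic order. One cannot simply run Buchberger's algorithm, nor the FGLM or Gröbner-walk conversion algorithms, in polynomial time in general, because intermediate polynomials can blow up in degree and in number. The plan is instead to compute $G_2$ directly and explicitly: since we want only the part of the graded-lex \GB basis in degree $\le d$, and $d=O(1)$, there are only $n^{O(d)}$ candidate leading monomials to consider. For each monomial $m$ of degree $\le d$ I would use the explicit description of $Sol(\mathcal{C})$ as an affine space — equivalently, the linear-algebra structure of $G_1$ — to compute the normal form of $m$ modulo $\vb{I}(Sol(\mathcal{C}))$ with respect to graded lex, i.e.\ to find the unique representative of $m$ supported on ``standard'' monomials. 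Collecting, for each $m$ that turns out to be a leading monomial of the ideal, the element $m - \mathrm{NF}(m)$, and then reducing this set to the reduced $d$-truncated \GB basis, gives $G_2$. Each normal-form computation reduces to solving a polynomially sized linear system over $\mathbb{F}$ (evaluating monomials on a basis of the affine space, or iteratively reducing by $G_1$ while controlling degree), hence is polynomial time; and there are $n^{O(d)}$ of them, so the total is $n^{O(d)}$.

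I expect the main obstacle to be precisely controlling the cost and correctness of this conversion: one must show that reducing a degree-$\le d$ monomial by the lex \GB basis $G_1$ (which contains polynomials of possibly large degree in the non-pivot variables) either stays within degree $O(d)$ after collecting terms, or can be short-circuited by working with the affine-space description so that no large intermediate polynomials ever arise. Concretely, the subtlety is that a lex reduction of a low-degree monomial can pass through high-degree intermediate polynomials even though both input and final remainder are low degree; the fix is to bypass lex reduction entirely and characterize the graded-lex normal form intrinsically — e.g.\ via evaluation at the (at most $2^r$, but we only need enough) points of $V$, or via a basis of the quotient ring $\mathbb{F}[X]/\vb{I}(Sol(\mathcal{C}))$ restricted to low degree — and to prove that the resulting $G_2$ is genuinely a $d$-truncated graded-lex \GB basis by checking Buchberger's criterion only on the relevant low-degree $S$-polynomials. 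Once this is in place, combining the polynomial-time computation of $G_1$ with the $n^{O(d)}$-time conversion yields \cref{thm:MainTheorem}, and by the observation at the end of \cref{sec:preliminaries} the $\IMP_d(\Gamma)$ tractability follows.
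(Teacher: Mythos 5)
Your high-level route matches the paper's: observe that a \Minorityns-closed Boolean language gives an affine subspace over $\mathbb{F}_2$, put the system in reduced row echelon form, lift each mod-2 equation to a real polynomial with the same 0/1 zero set to obtain a reduced \lex \GB basis $G_1$ (together with the domain polynomials), and then convert $G_1$ to a $d$-truncated \grlex \GB basis by iterating over the $n^{O(d)}$ candidate leading monomials in increasing \grlex order. That first step is essentially \cref{sec:GBlex}, and the iteration scheme is essentially the algorithm of \cref{sec:OurAlgorithm}.

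Where the proposal has a genuine gap is exactly the point you flag as ``the main obstacle'': you do not supply a mechanism that makes each normal-form computation polynomial time. You correctly observe that naively reducing a degree-$\le d$ monomial by $G_1$ in \lex order can produce huge intermediate (and final, after collecting) expressions -- indeed $\reduce{q}{G_1}=f_{i_1}\cdots f_{i_k}$ expands over $\mathbb{R}$ to a multilinear polynomial with up to $2^{n-r}$ terms, and FGLM would cost $O(nD(\langle G_1\rangle)^3)$ with $D(\langle G_1\rangle)=2^{n-r}$. But your two proposed fixes are only sketches: ``evaluating monomials on a basis of the affine space'' requires an argument that polynomially many (and deterministically chosen) points of $V$ suffice to detect linear dependence among the degree-$\le d$ standard monomials, which you do not give and which is not obvious; and ``iteratively reducing by $G_1$ while controlling degree'' is precisely the thing that needs a new idea, not a fix by itself. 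The paper's concrete device is to never expand $\reduce{q}{G_1}$ into monomials at all, but to keep it as the short symmetric expression in \emph{Boolean terms} given by \cref{eq:mod2expansion}: a product of $m\le d$ Boolean functions is a linear combination of $O(1)$ XOR-sums $(f_{i_1}\oplus\cdots\oplus f_{i_k})$ with constant-size coefficients $\pm 1/2^{m-1}$. \cref{lem:C} then shows inductively that every Boolean term except the ``longest'' one already occurs as the longest Boolean term of some earlier $c_j$, so dependence is decided by a simple syntactic match on longest Boolean terms rather than by linear algebra over an exponentially large quotient; \cref{thm:correctness_algorithm} uses this to certify correctness and the $O(n^{2d})$ running time. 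Without this (or an equivalently concrete) representation and matching criterion, the conversion step in your plan is not established, and that step is the core of \cref{thm:MainTheorem}.
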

This proves the following:
\begin{corollary}\label{cor:MainCorollary}
  The $\IMP_d(\Gamma)$, over the Boolean domain, can be solved in polynomial time for $d=O(1)$ if the solution space of every constraint in $\Gamma$ is closed under the minority polymorphism.
\end{corollary}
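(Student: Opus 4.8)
Since Theorem~\ref{thm:MainTheorem} is stated above, \cref{cor:MainCorollary} itself is immediate: having the $d$-truncated reduced \GB basis $G_2$ for the graded lexicographic order in $n^{O(d)}$ time, the discussion at the end of \cref{sec:preliminaries} gives $f\in\Ideal{Sol(\Cc)}\iff\reduce{f}{G_2}=0$, and this reduction costs $n^{O(d)}$ because $\deg f\le d$. So the plan below is really a plan for Theorem~\ref{thm:MainTheorem}, which the corollary then quotes.

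First I would use the fact that a Boolean relation is invariant under the minority operation iff it is \emph{affine}, i.e.\ a coset of a $GF(2)$-subspace; hence $Sol(\Cc)$ is an affine subspace $S\subseteq\{0,1\}^n$ (if empty, $\Ideal{Sol(\Cc)}=\GIdeal{1}$ and we are done). Gaussian elimination over $GF(2)$ puts the defining system in reduced row echelon form in polynomial time: after reindexing, $x_1,\dots,x_r$ are pivots, $x_{r+1},\dots,x_n$ are free, and $x_j\equiv\bigoplus_{i\in T_j}x_i\oplus c_j\pmod 2$ with $T_j\subseteq\{r+1,\dots,n\}$. Over $GF(2)$ the reduced \GB basis of the ideal generated by these equations together with the $x_i^2-x_i$ is simply $\{x_j\oplus\ell_j:j\le r\}\cup\{x_i^2\oplus x_i:i>r\}$, where $\ell_j=\bigoplus_{i\in T_j}x_i\oplus c_j$ (all pairs of leading terms are coprime, so Buchberger's first criterion applies). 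I would then transform this to regular arithmetic by replacing each $\ell_j$ with the unique multilinear polynomial representing that parity on the cube, $p_j\mydef\tfrac12\bigl(1-(-1)^{c_j}\prod_{i\in T_j}(1-2x_i)\bigr)$, obtaining $G_1\mydef\{x_j-p_j:j\le r\}\cup\{x_i^2-x_i:i>r\}$. Choosing a lexicographic order in which every pivot exceeds every free variable, $\LT(x_j-p_j)=x_j$, the leading terms are again pairwise coprime, so $G_1$ is a reduced \GB basis of $\GIdeal{G_1}$, and a count of standard monomials ($2^{n-r}=|S|=\dim\Rational[x]/\Ideal{Sol(\Cc)}$, the latter since a vanishing ideal is radical) forces $\GIdeal{G_1}=\Ideal{Sol(\Cc)}$. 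Each $p_j$ may have degree $\Theta(n)$, so $G_1$ is stored in the factored form above; this is exactly why $G_1$ alone does not solve $\IMP_d$, since reducing a degree-$d$ input via $x_j\mapsto p_j$ can blow up.

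The heart of the proof is converting $G_1$ to the $d$-truncated reduced \GB basis $G_2$ for the graded lexicographic order in polynomial time; generic basis conversion (FGLM, or the \GB walk) would cost time polynomial in $|S|$, which is exponential. The plan is to avoid reducing candidate polynomials one at a time and instead compute the whole space $W_d\mydef\{f:\deg f\le d\}\cap\Ideal{Sol(\Cc)}$, a subspace of the $n^{O(d)}$-dimensional space $V_d$ of polynomials of degree $\le d$. Passing to the $\pm1$-variables $z_i=1-2x_i$ and parametrising $S$ by the free variables via $x_j\mapsto p_j$, a degree-$\le d$ monomial $m$ pushes forward to $m|_{x_j=p_j}$, which after collecting is a combination of at most $2^d$ characters $\prod_{i\in V}z_i$, each $V$ a symmetric difference of at most $d$ of the sets $T_j$ and singleton free variables. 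Consequently the condition ``$f$ vanishes on $S$'', although it ranges over $|S|=2^{n-r}$ points, only imposes the vanishing of the coefficients of $f$ against a set $\mathcal V^{\ast}$ of at most $n^{O(d)}$ distinct characters; enumerating $\mathcal V^{\ast}$ and the corresponding coefficient vectors is $n^{O(d)}$ work (each push-forward is a product of $O(d)$ two-term factors), and solving the resulting $n^{O(d)}\times n^{O(d)}$ rational linear system yields a basis of $W_d$. I would finish by writing this basis as the rows of a matrix whose columns are the degree-$\le d$ monomials in decreasing graded lexicographic order, reducing to reduced row echelon form, and keeping those rows whose pivot monomial is minimal under divisibility: since the graded lexicographic order refines degree, every element of the full reduced \GB basis with leading monomial of degree $\le d$ already has degree $\le d$, the row-echelon tails are all standard monomials, and one checks that the kept rows are precisely those elements, i.e.\ $G_2$.

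The main obstacle is the bound $|\mathcal V^{\ast}|\le n^{O(d)}$: this is where the affine (minority) structure is essential, and it is what turns an exponentially large vanishing condition into polynomially sized linear algebra. Everything else --- Gaussian elimination over $GF(2)$, checking the Buchberger criterion for $G_1$, and the echelon-form extraction of $G_2$ --- is routine once that bound is in hand; the only additional care needed is to confirm that all intermediate rational numbers (the $p_j$ coefficients, the character coefficients, the Gaussian elimination) have polynomial bit-size.
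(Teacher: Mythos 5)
Your deduction of the corollary from \cref{thm:MainTheorem} is exactly the paper's: once the $d$-truncated grlex \GB basis $G_2$ is in hand in $n^{O(d)}$ time, the remainder test at the end of \cref{sec:preliminaries} settles membership for any $f$ of degree $\le d$ in $n^{O(d)}$ time. Your $G_1$ is also the paper's up to reparametrisation: writing $M(f_i)$ as $\tfrac12\bigl(1-(-1)^{c_j}\prod_{i\in T_j}(1-2x_i)\bigr)$ expands to exactly \cref{eq:mod2regexpansion}, and your dimension-count argument for $\GIdeal{G_1}=\Ideal{Sol(\Cc)}$ is a minor variant of the paper's. Where you genuinely diverge is the conversion step. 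The paper runs an FGLM-style sweep over the degree-$\le d$ monomials in increasing grlex order, representing each $\reduce{q}{G_1}$ as a combination of ``Boolean terms'' via \cref{eq:mod2expansion} and maintaining a bookkeeping set $C$ keyed by longest Boolean terms to decide, monomial by monomial, whether $q$ goes into $B(G_2)$ or $\LM(G_2)$. You instead compute the whole truncated ideal $W_d=\Ideal{Sol(\Cc)}\cap\Field[\x]_d$ in one batch: push every degree-$\le d$ monomial through $x_j\mapsto p_j$ into the $\pm1$-character basis $z_V$, note that only $n^{O(d)}$ characters arise (each monomial contributes at most $2^d$ of them, each a symmetric difference of at most $d$ supports, with $z_i^2=1$ taking care of free-variable squares), solve the resulting $n^{O(d)}\times n^{O(d)}$ linear system for its kernel, and read $G_2$ off the RREF over the monomial basis, keeping rows with divisibility-minimal pivots — which is correct since in grlex any element of $I$ with leading monomial of degree $\le d$ already lies in $W_d$, the pivot set is therefore $\LM(I)\cap\Field[\x]_d$, and a polynomial in $I$ with tail in standard monomials is unique. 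Boolean terms and characters are dual descriptions of the same data, so the two methods compute the same object; your formulation makes the crucial polynomiality argument (the character space is small) very transparent, at the price of a larger one-shot linear solve and the obligation to track rational bit-size, whereas the paper's incremental algorithm keeps each step's arithmetic tiny but hides the character-counting inside \cref{lem:C}.
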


\emph{Structure of the proof:} A high level description of the proof structure is as follows. Each constraint that is closed under the minority polymorphism can be written in terms of linear equations (mod 2) (see e.g. \cite{Chen09}). In \cref{sec:GBlex}, we first express these equations in their reduced row echelon form: that is to say the `leading variable' (the variable that comes first in the lexicographic order or \lex in short, see \cref{def:lex and grlex}) in each equation does not appear in any other (mod 2) equation.
We then show how each polynomial in (mod 2) translates to a polynomial in regular arithmetic with exactly the same 0/1 solutions. The use of elementary symmetric polynomials allows for an efficient computation of the polynomials in regular arithmetic. Using these, we produce a set of polynomials $G_1$ and prove that $G_1$ is the reduced \GB basis of $\vb{I}(Sol(\Cc))$ in the \lex order.
As already mentioned, a \GB basis in the \lex order does not guarantee the efficient solvability of the $\IMP_d$.
We provide a conversion algorithm in \cref{sec:conversion} which converts $G_1$ to the $d$-truncated reduced \GB basis $G_2$ of $\vb{I}(Sol(\Cc))$ in the graded lexicographic ordering (\grlex for short, see \cref{def:lex and grlex}). %This algorithm is an altered form of the FGLM conversion algorithm \cite{FAUGERE1993329}, whose working is briefly explained in \cref{sec:FGLM}. 
In \cref{sec:ProductOfBooleanFunctions} we show how polynomials in $G_1$ from \cref{sec:GBlex} are handled so our conversion algorithm in \cref{sec:OurAlgorithm} works in polynomial time. \cref{thm:correctness_algorithm} proves the correctness and polynomial running time of the conversion algorithm. This gives the proof of the main results of the paper stated in  \cref{thm:MainTheorem} and \cref{cor:MainCorollary}.

\section{\GB bases in \lex order}\label{sec:GBlex}

Consider an instance $\mathcal{C}=(X=\{x_1,\dots,x_n\},D=\{0,1\},C)$ of CSP($\Gamma$) where $\Gamma$ is a language  that is closed under \Minorityns. Any constraint of $\Cc$ can be written as a system of linear equations over $\textrm{GF}(2)$ (see e.g. \cite{Chen09}). 
These linear systems with variables $x_1,\dots,x_n$ can be solved by Gaussian elimination. If there is no solution, then we have from Hilbert's Weak Nullstellensatz (\cref{th:nullstz}) that $1\in \vb{I}(Sol(\Cc)) \iff Sol(\Cc)=\emptyset \iff \vb{I}(Sol(\Cc)) = \mathbb{R}[\vb{x}]$. If $1\in \vb{I}(Sol(\Cc))$ the reduced \GB basis is $\{1\}$. We proceed only if $Sol(\Cc)\neq\emptyset$.
In this section, we assume the \lex order $>_\lex$ with $x_1>_\lex x_2>_\lex \dots >_\lex x_n$. We also assume that the linear system has $r\leq n$ equations and is already in its reduced row echelon form with $x_i$ as the leading monomial of the $i$-th equation. Let $Supp_i\subset [n]$ such that $\{x_j:j\in Supp_i\}$ is the set of variables appearing in the $i$-th equation of the linear system except for $x_i$. Let the $i$-th equation be $R_i = 0 \Mod{2}$ where 
\begin{equation}\label{eq:Ri}
    R_i := x_i\oplus f_i,     
\end{equation}
with $i\in[r]$ and $f_i$ is the Boolean function $(\bigoplus_{j\in Supp_i} x_j)\oplus \alpha_i$ and $\alpha_i = 0/1$.

\subsection{From (mod 2) to regular arithmetic \GB basis} 
In this section, we show how to transform $R_i$'s into polynomials in regular arithmetic. %in order to produce the reduced \GB basis $G_1$ of $\vb{I}(Sol(\Cc))$ over $\Real[x_1, \ldots, x_n]$ in the \lex ordering. 
The idea is to map $R_i$ to a polynomial $R_i'$ over $\Real[x_1, \ldots, x_n]$ such that $a\in \{0,1\}^n$ satisfies $R_i=0$ if and only if $a$ satisfies $R_i' = 0$. Moreover, $R_i$ is such that it has the same leading term as $R_i'$. %We observe that for every pair $R_i'$ and $R_j'$, with $i\not= j$, the reduced $S$-polynomial $S(R_j',R_i')|_{\{R_j',R_i'\}}$ is identically equal to zero (the latter follows, e.g., by the Interlacing Lemma \cite{MonaldoMastrolilli2019}, or more simply because the leading monomials are relatively prime). 
We produce a set of polynomials $G_1$ and prove that $G_1$ is the reduced \GB basis of $\vb{I}(Sol(\Cc))$ over $\Real[x_1, \ldots, x_n]$ in the \lex ordering.
%In this section we show how the equations in (mod 2) are transformed into regular arithmetic so that both have exactly the same solution set in the Boolean hypercube.
We define $R'_i$ as
\begin{align}
    R_{i}':= x_i - M(f_i) \label{eq:R'_i}
\end{align}
where
\begin{align}\label{eq:mod2regexpansion}
    M(f_i) &= 
    \begin{cases}
    \sum\limits_{k=1}^{|Supp_i|} \left( (-1)^{k-1}\cdot 2^{k-1}\sum\limits_{\{x_{j_1},\dots,x_{j_k}\}\subseteq Supp_i}x_{j_1}x_{j_2}\cdots x_{j_k}\right) \textrm{ when } \alpha_i=0\\
    1+\sum\limits_{k=1}^{|Supp_i|} \left( (-1)^{k}\cdot 2^{k-1}\sum\limits_{\{x_{j_1},\dots,x_{j_k}\}\subseteq Supp_i}x_{j_1}x_{j_2}\cdots x_{j_k}\right)  \textrm{ when } \alpha_i=1
    \end{cases}
\end{align}

\begin{lemma} Consider the following set of polynomials:
\begin{align}\label{eq:GrobBasisLex}
    G_1=\{R_1',\ldots, R_r',x_{r+1}^2-x_{r+1},\ldots, x_n^2-x_n\},
\end{align}
where $R_i'$ is from \cref{eq:R'_i}. $G_1$ is the reduced \GB basis of $\vb{I}(Sol(\Cc))$ in the lexicographic order $x_1>_\lex x_2>_\lex \dots,>_\lex x_n$.
\end{lemma}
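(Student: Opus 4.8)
The plan is to verify the three defining properties of a reduced \GB basis directly: (i) $G_1 \subseteq \vb{I}(Sol(\Cc))$, (ii) $G_1$ generates $\vb{I}(Sol(\Cc))$ and its leading terms generate $\LT(\vb{I}(Sol(\Cc)))$, and (iii) $G_1$ is reduced (each element is monic, and no monomial of any element lies in the ideal generated by the leading terms of the other elements).

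First I would establish the pointwise equivalence $M(f_i)(a) = f_i(a)$ for every $a \in \{0,1\}^n$, where on the left the arithmetic is over $\Real$ and on the right it is the $\bmod\ 2$ Boolean function. This is the key algebraic identity: when $\alpha_i = 0$ one checks by induction on $|Supp_i|$ that $1 - \prod_{j \in Supp_i}(1-2x_j)/\ (\text{something}) $\dots more precisely that $\bigoplus_{j \in Supp_i} x_j = \tfrac{1}{2}\bigl(1 - \prod_{j\in Supp_i}(1-2x_j)\bigr)$ as a real polynomial restricted to $\{0,1\}$, and then expanding the product $\prod(1-2x_j)$ gives exactly the alternating-sign, powers-of-two elementary-symmetric expression in \cref{eq:mod2regexpansion}. (Note $x_j^2 = x_j$ on $\{0,1\}$, so no higher powers of a single variable appear, which is why only squarefree monomials occur.) The $\alpha_i = 1$ case follows by complementation. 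Consequently $R_i'(a) = a_i - f_i(a)$, which vanishes on $a \in \{0,1\}^n$ iff $R_i(a) = 0 \bmod 2$, so every $R_i'$ vanishes on $Sol(\Cc)$; the domain polynomials $x_j^2 - x_j$ obviously vanish there too. This gives (i).

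Next, for (ii) and (iii), I would compute leading terms under $>_\lex$. By construction $M(f_i)$ only involves variables from $Supp_i \subseteq \{i+1,\dots,n\}$ (reduced row echelon form), so $\LT(R_i') = x_i$, and $\LT(x_j^2 - x_j) = x_j^2$ for $j > r$. These leading monomials are pairwise coprime-ish in the relevant sense: $x_1, \dots, x_r$ and $x_{r+1}^2, \dots, x_n^2$. To see $G_1$ is a \GB I would invoke Buchberger's criterion — since all pairs of leading terms are relatively prime (distinct single variables, or a variable $x_i$, $i\le r$, versus $x_j^2$, $j>r$, sharing no variable because $i \ne j$), every $S$-polynomial reduces to zero, so $G_1$ is a \GB of $\GIdeal{G_1}$. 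It remains to check $\GIdeal{G_1} = \vb{I}(Sol(\Cc))$: the inclusion $\subseteq$ is (i); for $\supseteq$, note $\GIdeal{G_1}$ is zero-dimensional and radical (it contains $x_j^2-x_j$ for all $j$ — for $j \le r$ this follows from $R_i'$ and the others via reduction, or one argues directly that $\Variety{G_1} \subseteq \{0,1\}^n$ and $|\Variety{G_1}|$ equals the number of solutions of the linear system), and $\Variety{G_1} = Sol(\Cc)$ by the pointwise equivalence above; then by the Strong Nullstellensatz (or the standard fact that a zero-dimensional radical ideal equals the vanishing ideal of its variety) we get equality. Finally, reducedness: each $R_i'$ is monic with leading term $x_i$, and no other monomial of $R_i'$ or of $x_j^2-x_j$ is divisible by any $x_k$ ($k \le r$, $k \ne i$) appearing as a leading term — indeed $M(f_i)$ involves only variables indexed $> i$ which are not among the leading variables $x_1,\dots,x_r$ unless they index another $R_k'$, but a monomial $x_{j_1}\cdots x_{j_k}$ of $M(f_i)$ is divisible by $\LT(R_k') = x_k$ only if $k \in \{j_1,\dots,j_k\}\subseteq Supp_i$, which would contradict reduced row echelon form (the leading variable $x_k$ of equation $k$ does not appear in equation $i$). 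Likewise $x_j^2 - x_j$ for $j > r$ has monomials $x_j^2, x_j$, neither divisible by any $x_k^2$ with $k \ne j$, $k > r$, nor by any $x_i$, $i \le r$. So $G_1$ is reduced.

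The main obstacle is the combinatorial bookkeeping in the reducedness argument — specifically, ensuring that no monomial appearing in some $M(f_i)$ is divisible by the leading term $x_k$ of another basis element $R_k'$; this is exactly where the reduced row echelon hypothesis on the linear system is essential and must be used carefully. Everything else is either the clean inductive identity for $M(f_i)$ (a one-line generating-function computation, $\prod_{j}(1-2x_j) = \sum_k (-2)^k e_k$) or a textbook application of Buchberger's coprime-leading-term criterion and the Nullstellensatz.
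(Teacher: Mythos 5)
Your proposal is correct and covers all the required points, but it takes a genuinely different route from the paper for the crucial inclusion $\vb{I}(Sol(\Cc)) \subseteq \GIdeal{G_1}$. The paper argues directly: any $p \in \vb{I}(Sol(\Cc))$ has a normal form $\reduce{p}{G_1}$ that is multilinear in $x_{r+1},\dots,x_n$ only; since each of the $2^{n-r}$ points of $\{0,1\}^{n-r}$ extends to a unique point of $Sol(\Cc)$ (at which $\reduce{p}{G_1}$ must vanish), the invertible evaluation map on the $2^{n-r}$-dimensional space of such multilinear forms forces $\reduce{p}{G_1}=0$. You instead go through radicality plus a Nullstellensatz. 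Your route works, but it needs one extra ounce of care that the paper's counting argument does not: the Strong Nullstellensatz in the form $\Ideal{\Variety{I}}=\sqrt{I}$ fails over $\Real$ in general (e.g.\ $\GIdeal{x^2+1}$), and the ``standard fact that a zero-dimensional radical ideal equals the vanishing ideal of its variety'' is a statement over an algebraically closed field. What rescues it here is exactly what you observe en route — $\GIdeal{G_1}$ contains $x_j^2-x_j$ for every $j$ (explicitly for $j>r$, and after reducing $x_i^2-x_i$ modulo $R_i'$ and the free-variable domain polynomials for $i\le r$), so every complex zero is already in $\{0,1\}^n\subset\Real^n$, and one may then invoke the paper's \cref{th:nullstz} (or Seidenberg's criterion, or simply compare $\dim_\Real\Real[X]/\GIdeal{G_1}=|B(G_1)|=2^{n-r}=|Sol(\Cc)|=\dim_\Real\Real[X]/\vb{I}(Sol(\Cc))$); this should be said explicitly rather than left as an appeal to a ``standard fact.'' Two further differences worth noting: your derivation of $M(f_i)$ via $\prod_{j}(1-2x_j)=\sum_k(-2)^k e_k$ is cleaner and more self-contained than the paper's iterated use of $x\oplus y = x+y-2xy$; and your reducedness argument (using reduced row echelon form to rule out $x_k\mid$ a monomial of $M(f_i)$) is a genuine improvement, since the paper dismisses this with ``by inspection.''
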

\begin{proof} For any two Boolean variables $x$ and $y$,
\begin{align}\label{eq:mod2arith}
    x\oplus y &= x+y- 2xy.
\end{align}
By repeatedly using \cref{eq:mod2arith} to obtain the equivalent expression for $f_i$, we see that $R_i=0 \Mod{2}$ and $R_i'=0$ have the same set of 0/1 solutions. Therefore $\Variety{\GIdeal{G_1}}$ is equal to $Sol(\Cc)$. This implies that $\GIdeal{G_1} \subseteq \vb{I}(Sol(\Cc))$. Moreover, $\LM(R_i)=\LM(R_i')=x_i$, by construction. For every pair of polynomials in $G_1$ the reduced $S$-polynomial is zero as the leading monomials of any two polynomials in $G_1$ are relatively prime. By Buchberger's Criterion (see \cref{th:crit}) it follows that $G_1$ is a \GB basis of $\GIdeal{G_1}$ over $\Real[x_1,\ldots,x_n]$ (according to the \lex order). In fact, it can be seen by inspection that $G_1$ is the \textit{reduced} \GB basis of $\GIdeal{G_1}$.
To prove that $\vb{I}(Sol(\Cc)) = \GIdeal{G_1}$, we need to prove that any $p \in \vb{I}(Sol(\Cc)) \implies p\in \GIdeal{G_1}$. It is enough to prove that $p|_{G_1} = 0$ as this implies $p\in \GIdeal{G_1}$. We have that $p|_{G_1}$ cannot contain variable $x_i$ for all $1\leq i \leq r$. Hence $p|_{G_1}$ is multilinear in $x_{r+1},x_{r+2},\dots,x_n$.  Each tuple of $D^{n-r}$ extends to exactly that $n-$tuple in $Sol(\Cc)$ whose coordinate associated with $x_i$ ($1\leq i\leq r$) is the unique value $x_i$ takes to satisfy $x_i\oplus f_i = 0$ (see \cref{eq:Ri} and \cref{eq:R'_i}). As $p|_{G_1}$ is multilinear in $x_{r+1},x_{r+2},\dots,x_n$, there are at most $2^{n-r}$ coefficients. Since every point of $D^{n-r}$ is a solution of $p|_{G_1}$, we see that every coefficeint of $p|_{G_1}$ is zero and hence $\reduce{p}{G_1}$ is the zero polynomial. Hence $G_1$ is the reduced \GB basis of $\vb{I}(Sol(\Cc))$.
\end{proof}

\begin{example}
Consider a system with just one equation with $R_1:= x_1\oplus x_2 \oplus x_3 = 0$ where  $x_1>_\lex x_2>_\lex x_3$. Then $f_1:= x_2 \oplus x_3$ and $M(f_1):= x_2 + x_3 -2x_2x_3$. The polynomial corresponding to \cref{eq:R'_i} is
\begin{align*}
    R_{1}':= x_1-x_2-x_3+2x_2x_3.
\end{align*}
The equations $R_1=0$ and $R_{1}'=0$ have the same set of 0/1 solutions and  $\LM(R_1)=\LM(R_{1}')=x_1$. For every pair of polynomials in $G=\{R_{1}',x_2^2-x_2,x_3^2-x_3\}$ the reduced $S$-polynomial is zero. By Buchberger's Criterion (see e.g. \cite{Cox} or \cref{th:crit} in the appendix) it follows that $G$ is a \GB basis over $\Real[x_1,x_2,x_3]$ (according to the specified \lex order). 
% \begin{enumerate}
%     \item $R_1=0$ and $R_{1}'=0$ have the same set of 0/1 solutions.
%     \item $\LM(R_1)=\LM(R_{1}')=x$.
%     \item $(x^2-x)|_{G}=0$, where $G=\{R_{1}',y^2-y,z^2-z\}$.
% \end{enumerate}
% Note that for every pair of polynomials in $G$ the reduced S-polynomial is zero. By Buchberger's Criterion it follows that $G$ is a \GB basis over $\Real[x,y,z]$ (according to the lexicographic order $x>y>z$). 
\end{example}
%%%%%%%%%%%%%%%%%%%%%%%%%%%%%%%%%%%%
% What we have seen for the previous example can be promptly generalized as follows.
% \begin{lemma}\label{th:minorityTransf}
% For every $i=1,\ldots, r$ we have:
% \begin{enumerate}[(i)]
%     \item $R_i=0 \Mod{2}$ and $R_i'=0$ have the same set of 0/1 solutions.\label{1}
%     \item $\LM(R_i)=\LM(R_i')=x_i$.\label{2}
%     %\item $(x_i^2-x_i)|_{G_i}=0$, where $G_i=\{R_i',x_j^2-x_j:j\in Supp_i\}$.\label{3}
% \end{enumerate}
% \end{lemma}
% \begin{proof} We see that \eqref{1} can be easily proved from \cref{eq:sameSol} by induction and \eqref{2} by construction. %\eqref{3} by direct inspection.
% \end{proof}
%
%%%%%%%%%%%%%%%%%%%%%%%%%%%%%%%%%%%%%%%%%%%%%%%%%%%%%%%%%%%%%%%%

Note that the reduced \GB basis in \cref{eq:GrobBasisLex} can be ``efficiently'' computed by exploiting the high degree of symmetry in each $M(f_i)$ and using a version of the elementary symmetric polynomials.

%******************************************************

\section{Conversion of basis}\label{sec:conversion}

Now that we have the reduced \GB basis in \lex order, we show how to obtain the $d$-truncated reduced \GB basis in \grlex order in polynomial time for any fixed $d=O(1)$. Before we describe our conversion algorithm, we show how to expand a product of Boolean functions. This expansion will play a crucial step in our algorithm.    %We do this by using a modified version of the FGLM conversion algorithm \cite{FAUGERE1993329}. 
\subsection{Expansion of a product of Boolean functions}\label{sec:ProductOfBooleanFunctions}
In this section, we show a relation between a product of Boolean functions and (mod 2) sums of the Boolean functions, which is heavily used in our conversion algorithm in \cref{sec:OurAlgorithm}. We have already seen from \cref{eq:mod2arith} that if $f,g$ are two Boolean functions,\footnote{We earlier considered Boolean variables, but the same holds for Boolean functions.}
%~\footnote{We remark that we are not assuming anything about the two functions $f$ and $g$, so they can be independent, dependent, equal, or with common variables, etc. So in few words \emph{arbitrary} Boolean functions.}
then 
$$2\cdot f\cdot g = f+g-(f\oplus g).$$

Hence it can be proved by repeated use of the above equation that the following holds for Boolean functions $f_1,$ $f_2,\dots,f_m$:

\begin{equation}\label{eq:mod2expansion}
\begin{aligned}
f_1\cdot f_2 \cdots f_m = \frac{1}{2^{m-1}}\biggl[ &\sum_{i\in[m]} f_i - \sum_{\{i,j\}\subset [m]} (f_i\oplus f_j) + \sum_{\{i,j,k\}\subset [m]} (f_i\oplus f_j \oplus f_k) + \dots +\\ 
&(-1)^{m-1}(f_1\oplus f_2 \oplus \dots \oplus f_m)\biggr].
\end{aligned}
\end{equation}
We call each Boolean function of the form $(f_{i_1}\oplus\cdots\oplus f_{i_k})$ in \cref{eq:mod2expansion} as a \textbf{Boolean term}. We call the Boolean term $(f_1\oplus f_2 \oplus \dots \oplus f_m)$ as the \textbf{longest Boolean term} of the expansion.
Thus, a product of Boolean functions can be expressed as a linear combination of Boolean terms. Note that \cref{eq:mod2expansion} is \textit{symmetric} with respect to $f_1,f_2,\dots,f_m$ as any $f_i$ interchanged with $f_j$ produces the same expression. It is no coincidence that we chose the letter $f$ in the above equation: we later apply this identity using $f_j$ from $R_j:=x_j\oplus f_j$ (see \cref{sec:GBlex}). When we use \cref{eq:mod2expansion} in the conversion algorithm, we will have to evaluate a product of at most $d$ functions, i.e. $m\leq d=O(1)$.
We now see in the right hand side of \cref{eq:mod2expansion} that the coefficient $1/2^{m-1}$ is of constant size and there are $O(1)$ many Boolean terms.

%\subsection{FGLM conversion algorithm}\label{sec:FGLM}
\subsection{Our conversion algorithm}\label{sec:OurAlgorithm}
The FGLM \cite{FAUGERE1993329} conversion algorithm is well known in computer algebra for converting a given reduced \GB basis of a zero dimensional ideal in some ordering to the reduced \GB basis in any other ordering. However, it does so with $O(nD(\langle G_1 \rangle)^3)$ many arithmetic operations, where $D(\GIdeal{G_1})$ is the dimension of the $\mathbb{R}$-vector space $\mathbb{R}[x_1,\dots,x_n]/\GIdeal{G_1}$ (see Proposition 4.1 in \cite{FAUGERE1993329}). $D(\GIdeal{G_1})$ is also equal to the number of common zeros (with multiplicity) of the polynomials from $\GIdeal{G_1}$, which would imply that for the combinatorial ideals considered in this paper, $D(\GIdeal{G_1})=O(2^{n-r})$. This exponential running time is avoided in our conversion algorithm by exploiting the symmetries in \cref{eq:mod2regexpansion} and by truncating the computation up to degree $d$. %A proposition that is useful here is as follows:

% \begin{proposition}[\cite{FAUGERE1993329}]
% Let $I$ be a zero dimensional ideal and $(G_1,<_1)$, the reduced \GB basis with respect to a monomial ordering $<_1$. Given a different monomial ordering $<_2$, it is possible to construct the reduced \GB basis $(G_2,<_2)$ with respect to the ordering $<_2$ with $O(nD(I)^3)$ arithmetic operations, where $D(I)$ is the dimension of the $\mathbb{R}$-vector space $\mathbb{R}[x_1,\dots,x_n]/I$.
% \end{proposition}

% The combinatorial ideals that we deal with are zero dimensional. In this paper, $<_1$ is the \lex ordering, $G_1$ denotes the reduced \GB basis that we produced in \cref{eq:GrobBasisLex}, $G_2$ is the reduced \GB basis in $<_2$, the \grlex ordering. For the ideal $\GIdeal{G_1}$, $D(\GIdeal{G_1}) = O(2^{n-r})$, hence a direct application of the FGLM conversion algorithm from $(G_1,<_\lex)$ to $(G_2,<_\grlex)$ can take exponential time. We will avoid this exponential growth by exploiting the symmetries and by truncating the computation up to degree $d$.\\

Some notations necessary for the algorithm are as follows: $G_1$ and $G_2$ are the reduced \GB basis of $\GIdeal{G_1}$ in \lex and \grlex ordering respectively. $\LM(G_i)$ is the set of leading monomials of polynomials in $G_i$ for $i\in\{1,2\}$. Since we know $G_1$, we know $\LM(G_1)$, whereas $G_2$ and $\LM(G_2)$ are constructed by the algorithm. $B(G_1)$ is the  set of monomials that cannot be divided (considering the \lex order) by any monomial of $\LM(G_1)$. Therefore, $B(G_1)$ is the set of all multilinear monomials in variables $x_{r+1},\dots,x_n$. Similarly, $B(G_2)$ is the set of monomials that cannot by divided (considering the \grlex order) by any monomial of $\LM(G_2)$. 

Recall the definition of $f_i$ for $i\leq r$ from \cref{sec:GBlex}. For $i>r$, for notational purposes, we define the Boolean function $f_i:=x_i$.

\begin{lemma}\label{lem:q in Boolean Terms}
Consider a monomial $q$ such that $deg(q)\leq d$. Then $\reduce{q}{G_1}$ can be expressed as a linear combination of Boolean terms.
\end{lemma}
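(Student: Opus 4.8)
The plan is to exhibit an explicit linear combination of Boolean terms and then verify that it coincides with $\reduce{q}{G_1}$ by comparing normal forms. First I would write $q=x_{i_1}x_{i_2}\cdots x_{i_m}$ as a product of $m=\deg(q)\le d$ variables, with indices $i_1,\dots,i_m$ not necessarily distinct, and recall the convention $f_i:=x_i$ for $i>r$. Since $R'_i=x_i-M(f_i)\in G_1$ for $i\le r$ (see \cref{eq:R'_i}) and $x_i=M(f_i)$ trivially for $i>r$, every factor satisfies $x_{i_\ell}\equiv M(f_{i_\ell})\pmod{\GIdeal{G_1}}$. I would then set
\[
  p\;:=\;\frac{1}{2^{m-1}}\sum_{\emptyset\ne S\subseteq[m]}(-1)^{|S|-1}\,M\!\Bigl(\bigoplus_{\ell\in S}f_{i_\ell}\Bigr),
\]
which is by construction a linear combination of the $2^m-1=O(1)$ Boolean terms $\bigoplus_{\ell\in S}f_{i_\ell}$ (each an $\oplus$-sum of at most $m\le d$ of the $f_i$, with coefficient $\pm 1/2^{m-1}$); degenerate terms such as $f_a\oplus f_a$, which is the constant $0$, are harmless. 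The goal is to prove $\reduce{q}{G_1}=p$.

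The verification rests on two facts. First, both $\reduce{q}{G_1}$ and $p$ are multilinear polynomials in the free variables $x_{r+1},\dots,x_n$ alone: for $\reduce{q}{G_1}$ this holds because its support lies in $B(G_1)$, the set of multilinear monomials in $x_{r+1},\dots,x_n$; for $p$ it holds because the reduced-row-echelon structure of the linear system forces $Supp_i\subseteq\{r+1,\dots,n\}$, so each $\bigoplus_{\ell\in S}f_{i_\ell}$ is an affine-mod-$2$ function of $x_{r+1},\dots,x_n$ only, whose representation $M(\cdot)$ given by \cref{eq:mod2regexpansion} is multilinear in exactly those variables. Second, $\reduce{q}{G_1}$ and $p$ agree at every point $a\in\{0,1\}^{n-r}$: such an $a$ extends uniquely to a solution $\tilde a\in Sol(\Cc)$ by assigning the $x_i$-coordinate the value $f_i(a)$ for each $i\le r$, and since $\reduce{q}{G_1}\equiv q\pmod{\vb{I}(Sol(\Cc))}$ while $\reduce{q}{G_1}$ contains no $x_i$ with $i\le r$, we get $\reduce{q}{G_1}(a)=q(\tilde a)=\prod_{\ell=1}^m f_{i_\ell}(a)$; on the other side, $M(g)$ agrees with $g$ on $\{0,1\}$ for every affine-mod-$2$ function $g$, so evaluating $p$ at $a$ reproduces exactly the right-hand side of \cref{eq:mod2expansion} applied to $f_{i_1},\dots,f_{i_m}$ at $a$, which equals $\prod_{\ell=1}^m f_{i_\ell}(a)$ as well. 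Two multilinear polynomials in $x_{r+1},\dots,x_n$ that agree on $\{0,1\}^{n-r}$ are identical (by the same counting argument used to identify $G_1$ as the reduced \GB basis of $\vb{I}(Sol(\Cc))$), hence $\reduce{q}{G_1}=p$ and the lemma follows.

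I do not expect a serious obstacle here: once the explicit $p$ is in place, the argument is bookkeeping with \cref{eq:mod2expansion} and the unique-extension property of a reduced-row-echelon $\mathrm{GF}(2)$ system. The two points needing care are (i) confirming that every Boolean term occurring in $p$ is a function of the free variables alone, so that its $M(\cdot)$-image is genuinely a normal form modulo $G_1$ rather than something still reducible, and (ii) handling repeated variable factors in $q$, which is absorbed by permitting repeated indices in \cref{eq:mod2expansion} (its derivation from $2fg=f+g-(f\oplus g)$ does not require the arguments to be distinct). This explicit description of $\reduce{q}{G_1}$ as $O(1)$ Boolean terms is precisely what the conversion algorithm of \cref{sec:OurAlgorithm} consumes, since each such term is again affine mod $2$ and can be expanded in $n^{O(d)}$ time via \cref{eq:mod2regexpansion}.
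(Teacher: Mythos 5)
Your proof is correct and takes essentially the same route as the paper: write $\reduce{q}{G_1}$ as (the multilinear representation of) the Boolean-function product $f_{i_1}\cdots f_{i_m}$ and invoke \cref{eq:mod2expansion}. The paper asserts this identification in a single line, whereas you justify it rigorously via multilinearity in the free variables, the reduced-row-echelon fact $Supp_i\subseteq\{r+1,\dots,n\}$, and pointwise agreement on $\{0,1\}^{n-r}$; this fills in details the paper leaves implicit (including the harmlessness of repeated indices) but does not change the underlying argument.
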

\begin{proof}
Consider $q=x_{i_1}x_{i_2}\cdots x_{i_k}$ where $k\leq d$. Then from \cref{eq:Ri,eq:R'_i}, $\reduce{q}{G_1} = f_{i_1}f_{i_2}\cdots f_{i_k}$ and the lemma holds using \cref{eq:mod2expansion}.
% Recall that each variable $x_i$ is associated with its Boolean function $f_i$ which comes from the (mod 2) equations in \cref{sec:GBlex} (see \cref{eq:Ri}). 
% This is clear for $i\leq r$;  
% Recall that for each $i\leq r$, $x_i$ is the leading monomial of $R'_i$ from \cref{eq:R'_i}. The monomial $q$ is a product of $x_i$'s.
% Hence it is clear than in $\reduce{q}{G_1}$, division by $R'_i$ implies that each $x_i$ is replaced by $M(f_i)$ for $i\leq r$. For each $i>r$, $x_i$ remains as is, unless $q$ contains $x_i^2$, in which case it is replaced by $x_i$ as $x_i^2-x_i\in G_1$.
% The polynomial $\reduce{q}{G_1}$ is exponential in degree and coefficient size. 
% Since $R_i'=0 \iff x_i = M(f_i) \iff x_i = f_i \Mod{2}$, dividing by $R'_i$ is equivalent to replacing $x_i$ by $f_i$. The product $\reduce{q}{G_1}$ is now a product of Boolean functions and can be expanded using \cref{eq:mod2expansion} in linear time (i.e. $O(2^dnd)=O(n)$).
\end{proof}
Let elements $b_i$ of $B(G_2)$ be arranged in increasing \grlex order. We construct a set $C$ in our algorithm such that its elements $c_i$ are defined as $c_i=\reduce{b_i}{G_1}$ written as linear combinations of Boolean terms using \cref{lem:q in Boolean Terms}. %We now discuss properties of elements of $C$ that are used in the algorithm. %We ensure that none of the Boolean terms of $c_i\in C$ contains a constant: any Boolean term of the form ($(\bigoplus_{j}x_j) \oplus 1$) can be rewritten as $1-(\bigoplus_{j}x_j)$. This can easily be checked from \cref{eq:mod2arith} by using $x=\bigoplus_{j}x_j$ and $y=1$.\\
We say that a Boolean term $f$ of $c_i$ ``appears in $c_j$'' for some $j<i$ if the longest Boolean term of $c_j$ is $f\oplus\alpha$ where $\alpha=0/1$.\\ 

Let $Q$ be the set of all monomials $m$ such that $1<_\grlex deg(m) \leq_\grlex d$. We recommend the reader to refer to the example in \cref{sec:example} and \cref{sec:Example in detail} for an intuitive working of the algorithm. We now describe the algorithm in full (we assume $1 \notin \vb{I}(Sol(\Cc))$, else $G_1=\{1\}=G_2$ and we are done): \\

\noindent \textbf{Inputs:} Degree $d$, $G_1$, $Q$ \\
\textbf{Initial states:} $G_2= \emptyset$, $B(G_2)=\{1(=b_1)\}$, $C=\{1(=c_1)\}$, $q=x_n$.\\
\textbf{Outputs:} $d$-Truncated versions of $G_2$, $B(G_2)$.

%\textbf{Overview of the algorithm:} We are considering the problem of $\IMP_d$, where $d=O(1)$. Let $Q$ be the set of monomials of degree less than or equal to $d$ in \grlex order. Note that $|Q|=O(n^d)$. The algorithm chooses the smallest monomial $q\in Q$ considering the increasing \grlex ordering as explained earlier in \cref{sec:FGLM}. We start with $B(G_2)=\{1\}$, $G_2 = \LM(G_2)=\emptyset$, $q=x_n$. We introduce a set $C=\{1\}$.
\begin{itemize}
    \item \textbf{Main loop:}
    Find ${q}|_{G_1}$, by which we simply replace any occurrence of $x_i$ by the Boolean functions $f_i$. Expand ${q}|_{G_1}$ by using \cref{eq:mod2expansion}. 
    \begin{itemize}
        \item Suppose the longest Boolean term of ${q}|_{G_1}$ does not appear in any $c\in C$. Then ${q}|_{G_1}$ is written as a linear combination of $\reduce{b_i}{G_1}$ and its longest Boolean term (see \cref{lem:C}). This polynomial is added to $C$ and $q$ is added to $B(G_2)$. Go to \textbf{Termination check}.
        \item If the longest Boolean term of ${q}|_{G_1}$ appears in some $c\in C$, then every Boolean term of ${q}|_{G_1}$ can be written as linear combinations of $\reduce{b_j}{G_1}$'s. Note that if the longest Boolean term $f$ appears in $c$ as $f\oplus 1$, then we use $f\oplus 1 = 1- (f)$ (see \cref{eq:mod2arith}). Thus we have ${q}|_{G_1}=\sum_j k_j\reduce{b_j}{G1} \implies q-\sum_j k_jb_j \in \GIdeal{G_1}$. The polynomial $q-\sum_j k_jb_j$ is added to $G_2$ and $q$ to $\LM(G_2)$. Go to \textbf{Termination check}.
    \end{itemize}
    \item \textbf{Termination check:} We delete the occurrence of $q$ from $Q$. If $q$ was added to $\LM(G_2)$ then we delete any monomial in $Q$ that $q$ can divide. The algorithm terminates if $Q$ is empty, else go to \textbf{Next monomial}.
    \item \textbf{Next monomial:} Choose the smallest (according to \grlex order) monomial in $Q$ as $q$. Go to \textbf{Main loop}.
\end{itemize}

\begin{lemma}\label{lem:C}
The set $C$ is such that every $c_i$ is a linear combination of existing $\reduce{b_j}{G_1}$'s ($j<i$) and the longest Boolean term of $\reduce{b_i}{G_1}$.
\end{lemma}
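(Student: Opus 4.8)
The plan is to prove \cref{lem:C} by induction on $i$, that is, along the increasing \grlex order in which the algorithm produces the elements $b_i$ of $B(G_2)$. It is convenient to prove the slightly stronger statement that, writing $c_i=\reduce{b_i}{G_1}$ as a linear combination of Boolean terms via \cref{lem:q in Boolean Terms}, its longest Boolean term $T_i$ occurs with a \emph{nonzero} scalar coefficient, and $c_i=\lambda_i T_i+\sum_{j<i}\mu_{ij}\reduce{b_j}{G_1}$ for scalars $\lambda_i\neq0$ and $\mu_{ij}$. The nonzero-coefficient part is immediate from \cref{eq:mod2expansion}: if $b_i=x_{i_1}\cdots x_{i_k}$ then $\reduce{b_i}{G_1}=f_{i_1}\cdots f_{i_k}$, and there the longest Boolean term $f_{i_1}\oplus\cdots\oplus f_{i_k}$ carries the coefficient $(-1)^{k-1}/2^{k-1}\neq0$. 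The base cases are trivial: for $b_1=1$ we have $c_1=1=T_1$, and when $b_i$ is a single variable $x_{i_1}$ then $c_i=f_{i_1}$ is already a single (hence longest) Boolean term, so the claimed form holds with no lower-order terms.

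For the inductive step, I would fix $b_i=x_{i_1}\cdots x_{i_k}\in B(G_2)$; note that $b_i$ is multilinear, since $x_j^2-x_j\in\GIdeal{G_1}$ for every $j$ and hence no multiple of $x_j^2$ can lie in $B(G_2)$. Expanding $c_i$ by \cref{lem:q in Boolean Terms,eq:mod2expansion} gives $c_i=\sum_{\emptyset\neq S\subseteq\{i_1,\dots,i_k\}}\lambda_S\,T_S$ with $T_S=\bigoplus_{l\in S}f_l$ and $T_i=T_{\{i_1,\dots,i_k\}}$. It then suffices to show that for each \emph{proper} nonempty $S$ the Boolean term $T_S$ is a linear combination of polynomials $\reduce{b_m}{G_1}$ with $m<i$. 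The object to examine is the proper divisor $b_S:=\prod_{l\in S}x_l$ of $b_i$: since $\deg b_S<\deg b_i$ we have $b_S<_\grlex b_i$, so $b_S$ is handled strictly before $b_i$, and it is genuinely processed rather than discarded, because any monomial of $\LM(G_2)$ dividing $b_S$ would also divide $b_i$ and would have deleted $b_i$ from $Q$ before $b_i$ could be processed. At the moment $b_S$ is processed, $\reduce{b_S}{G_1}=\prod_{l\in S}f_l$ has longest Boolean term exactly $T_S$, so the \textbf{Main loop} either appended $b_S$ to $B(G_2)$ --- in which case $b_S=b_m$ for some $m<i$ and the longest Boolean term of $c_m$ is $T_S$ --- or appended $b_S$ to $\LM(G_2)$ --- in which case, by the algorithm's test, $T_S$ already appeared in some $c_m\in C$, with $m<i$ since $C$ only grows and $b_S$ precedes $b_i$. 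In either case $T_S$ appears in some $c_m$ with $m<i$, so the longest Boolean term $T_m$ of $c_m$ equals $T_S$ or $T_S\oplus1$; by the induction hypothesis $c_m=\lambda_m T_m+\sum_{m'<m}\mu_{mm'}\reduce{b_{m'}}{G_1}$ with $\lambda_m\neq0$, and solving for $T_m$ --- and using $f\oplus1=1-f$ from \cref{eq:mod2arith} together with $1=\reduce{b_1}{G_1}$ for the $\alpha=1$ case --- expresses $T_S$ as a linear combination of $\reduce{b_{m'}}{G_1}$'s with $m'\le m<i$. Substituting these into the expansion of $c_i$ yields $c_i=\lambda_i T_i+(\text{a linear combination of }\reduce{b_j}{G_1}\text{ with }j<i)$, with $\lambda_i\neq0$, closing the induction.

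The step I expect to be the main obstacle is the claim that every proper divisor $b_S$ of a monomial $b_i\in B(G_2)$ is itself processed by the algorithm; equivalently, that the pair $(\LM(G_2),B(G_2))$ built incrementally by the algorithm always forms a consistent staircase, so that $B(G_2)$ stays downward closed under divisibility. This is exactly what lets the induction hypothesis be invoked at all the monomials occurring in the Boolean-term expansion of $c_i$. Two secondary points also need care: ensuring the degree-one divisors of $b_i$ --- the single variables --- are processed before any monomial divisible by them, so that the length-one Boolean terms $f_l$ appearing in the expansion of $c_i$ are covered; and the bookkeeping that the earlier index $m$ one reaches is strictly smaller than $i$, so the induction is well founded.
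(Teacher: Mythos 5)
Your proof is correct and follows essentially the same route as the paper's: induct along the order in which the $b_i$'s are produced, expand $\reduce{b_i}{G_1}$ via \cref{eq:mod2expansion} into Boolean terms, observe each shorter Boolean term $T_S$ comes from a proper divisor $b_S$ of $b_i$ which (by the staircase/divisibility contradiction) must already be in $B(G_2)$, and apply the induction hypothesis to eliminate $T_S$. Your explicit tracking of the nonzero coefficient $\lambda_m$ is a worthwhile refinement the paper leaves implicit (it is what licenses ``solving for $T_m$''), while your second branch --- $b_S$ landing in $\LM(G_2)$ --- is actually vacuous by the very divisibility argument you just gave for discarded monomials, so it could be collapsed.
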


\begin{proof}
By definition, element $c_i$ is added to $C$ when a monomial $q$ is added to $B(G_2)$ where $b_i=q$ and $c_i=\reduce{b_i}{G_1}$ expressed in Boolean terms (see Main loop). This means that $q$ is not divisible by any monomial in $\LM(G_2)$. We prove the lemma by induction on the degree of $q$. Note that $b_1 = 1$ and hence $c_1 = \reduce{b_1}{G_1} = 1$.

If $deg(q)=1$, then $q$ is some $x_i$ and $\reduce{x_i}{G_1}$ is one of $0,1$ or $f_i$. If $\reduce{x_i}{G_1}$ is either 0 or 1, then it then appears in $c_1$. We are now in the second case of the Main loop, so $q$ should be added to $\LM(G_2)$ and not $B(G_2)$. Hence $\reduce{x_i}{G_1}$ can be neither 0 nor 1 and the lemma holds for $deg(q)=1$ as $f_i$ is the longest Boolean term.

Let us assume the statement holds true for all monomials with degree less than $m$. Consider $q$ such that $deg(q)=m$ and $q=x_{i_1}x_{i_2}\dots x_{i_m}$ where $i_j$'s need not be distinct, and the lemma holds for every monomial $<_{\grlex} q$. Then $\reduce{q}{G_1} = f_{i_1}\cdot f_{i_2}\cdots f_{i_m}$.
Let $(f_{j_1}\oplus \cdots \oplus f_{j_k})$ be a Boolean term in the expansion of $\reduce{q}{G_1}$ (by using \cref{eq:mod2expansion}), that is not the longest Boolean term, so $\{j_1,\dots,j_k\}\subset \{i_1,\dots,i_m\}$ and $k<m$. 
Consider the monomial $x_{j_1}x_{j_2}\dots x_{j_k}$. We will now prove that $x_{j_1}x_{j_2}\dots x_{j_k}$ is in fact some $b_l\in B(G_2)$ and there exists $c_l\in C$ which is a linear combination of $\reduce{b_i}{G_1}$'s and $(f_{j_1}\oplus \cdots \oplus f_{j_k})$.
The monomial $x_{j_1}x_{j_2}\dots x_{j_k}$ either belongs to $LM(G_2)$ or $B(G_2)$.
If $x_{j_1}x_{j_2}\dots x_{j_k}\in\LM(G_2)$ then it divides $q$, a contradiction to our choice of $q$. Therefore, $x_{j_1}x_{j_2}\dots x_{j_k}=b_l\in B(G_2)$.
Clearly $b_l<_{\grlex}q$ and the induction hypothesis applies, so there exists $c_l\in C$ such that
$$\reduce{b_l}{G_1}=c_l=\sum_{i < l} a_i\reduce{b_i}{G_1} + a_0(f_{j_1}\oplus \cdots \oplus f_{j_k})$$
where $a_i$'s are constants.
Then we simply use the above equation to substitute for the Boolean term $f_{j_1}\oplus \cdots \oplus f_{j_k}$ in $\reduce{q}{G_1}$ as a linear combination of $\reduce{b_i}{G_1}$ where $i\leq l$. We can do this for every Boolean term of $\reduce{q}{G_1}$ except the longest one. Hence the lemma holds.
\end{proof}

\begin{theorem}\label{thm:correctness_algorithm}
The conversion algorithm terminates for every input $G_1$ and correctly computes a $d$-truncated reduced \GB basis, with the \grlex ordering, of the ideal $\GIdeal{G_1}$ in polynomial time.
\end{theorem}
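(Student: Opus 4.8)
The plan is to establish three things separately: (i) termination, (ii) correctness — that the output is the $d$-truncated reduced \GB basis of $\GIdeal{G_1}$ in \grlex order — and (iii) the polynomial running time. For termination, I would observe that $Q$ is finite to begin with (it contains at most $n^{O(d)}$ monomials, being a subset of monomials of degree between $2$ and $d$), that every pass through the Main loop deletes the current monomial $q$ from $Q$ in the Termination check, and that nothing is ever added to $Q$. Hence the loop runs at most $|Q|+1$ times (counting the degree-$1$ monomials handled first) and the algorithm halts.

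For correctness I would argue that the algorithm is, in effect, a truncated and symmetry-exploiting reimplementation of FGLM \cite{FAUGERE1993329}, so it suffices to check that the two branches of the Main loop faithfully reproduce the FGLM decision. The key invariant, maintained inductively on the \grlex order of the processed monomials, is: after processing $q$, the monomials placed in $B(G_2)$ are exactly those multilinear-in-$x_{r+1},\dots,x_n$-reduced representatives whose images $\reduce{q}{G_1}$ are $\Real$-linearly independent in $\Real[\x]/\GIdeal{G_1}$, the set $C$ records these images written in the Boolean-term basis via \cref{lem:q in Boolean Terms}, and the monomials placed in $\LM(G_2)$ are precisely the minimal new leading monomials. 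Concretely, in the first branch the longest Boolean term of $\reduce{q}{G_1}$ does not appear in any $c\in C$, so by \cref{lem:C} the vector $\reduce{q}{G_1}$ is linearly independent of the $\reduce{b_i}{G_1}$ already recorded, hence $q\in B(G_2)$ is correct; in the second branch \cref{lem:C} lets us write $\reduce{q}{G_1}=\sum_j k_j\reduce{b_j}{G_1}$, so $q-\sum_j k_j b_j\in\GIdeal{G_1}$ and, since all proper divisors of $q$ were processed earlier and none was put in $\LM(G_2)$ (the Termination check prunes multiples of leading monomials), $q$ is a new minimal leading monomial and $q-\sum_j k_j b_j$ is the corresponding reduced \GB element. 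Processing monomials in \grlex order and pruning multiples of $\LM(G_2)$ from $Q$ guarantees both that we only ever need the reductions of $b_j<_\grlex q$ (so the recursion in \cref{lem:C} is well-founded) and that we generate exactly the reduced Gröbner basis elements of degree $\le d$; reducedness follows because each $b_j$ is itself not divisible by any element of $\LM(G_2)$, so the tail $-\sum_j k_j b_j$ is already fully reduced. I would also remark that truncating at degree $d$ loses nothing for the purpose of the $d$-truncated basis: every Gröbner element of degree $\le d$ has leading monomial in $Q\cup\{x_i\}$, and its tail is a combination of $b_j$'s of degree $\le d$, all of which are reached.

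For the running time I would bound the cost of a single Main loop iteration. Computing $\reduce{q}{G_1}$ is just substituting $x_i\mapsto f_i$, and expanding the product $f_{i_1}\cdots f_{i_m}$ with $m\le d$ via \cref{eq:mod2expansion} produces only $O(1)$ Boolean terms with coefficient $1/2^{m-1}$ of constant bit-size (this is exactly the point of \cref{sec:ProductOfBooleanFunctions}); each Boolean term is an XOR of the $f_{i}$'s, and by the reduced-row-echelon structure of \cref{sec:GBlex} its expansion over $x_{r+1},\dots,x_n$ is again a symmetric-type expression that can be represented compactly, so each term has polynomial size. The substitution step in \cref{lem:C} is applied to the $O(1)$ non-longest Boolean terms, each time replacing one Boolean term by a linear combination of at most $|C|\le|B(G_2)|\le n^{O(d)}$ vectors $\reduce{b_i}{G_1}$; since every such vector has polynomial size, each iteration costs $n^{O(d)}$ arithmetic operations on polynomially-bounded rationals. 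Multiplying by the $n^{O(d)}$ iterations gives total time $n^{O(d)}$. The main obstacle I anticipate is the bit-complexity/size control: one must show that no coefficient and no intermediate polynomial blows up — in particular that the repeated linear substitutions of \cref{lem:C}, nested to depth $O(d)$, keep coefficients polynomially bounded — and this is precisely where the explicit $2^{k-1}$ pattern of \cref{eq:mod2regexpansion} together with the bound $m\le d=O(1)$ must be invoked to cap denominators and numerators. Once that size bound is in hand, combining (i)–(iii) yields the theorem, and hence \cref{thm:MainTheorem} and \cref{cor:MainCorollary}.
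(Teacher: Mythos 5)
Your proposal follows the same overall decomposition as the paper's proof: a termination/running-time bound counting Main-loop iterations via $|Q|$, and a correctness argument resting on \cref{lem:C} and the triangular ``longest Boolean term'' structure; the one structural difference is that you argue correctness forward via a maintained linear-independence invariant where the paper runs a short proof by contradiction, which are two faces of the same argument and both tacitly rely on the linear independence of the distinct $\oplus$-sums of the $f_i$'s that underlies the ``appears in $c$'' test. Your added attention to coefficient bit-size, which the paper's proof does not address, is a legitimate concern, and your observation that $m\le d=O(1)$ caps the $2^{m-1}$ denominators is the right way to settle it, but it should be proved rather than left flagged as an anticipated obstacle.
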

\begin{proof}
The Main loop runs at most $|Q| = O(n^d)$ times. Evaluation of any $\reduce{q}{G_1}$ can be done in $O(n)$ steps (see \cref{eq:mod2expansion}), checking if previous $c_i$'s appear (and replacing every Boolean term appropriately if it does) takes at most $O(n^d)$ steps since there are at most $|Q|$ many elements in $C$. Hence the running time of the algorithm is $O(n^{2d})$.

Suppose the set of polynomials $\{g_1, g_2, \dots, g_k\}$ is the output of the algorithm for some input $G_1$. Clearly, $deg(g_i)\leq d$ for all $i\in [k]$. We now prove by contradiction that the output is the $d$-truncated \GB basis of the ideal $\GIdeal{G_1}$ with the \grlex ordering.  Suppose $g$ is a polynomial of the ideal with $deg(g)\leq d$, but no $\LM(g_i)$ can divide $\LM(g)$. In fact, since every $g_i\in \GIdeal{G_1}$ we can replace $g$ by $\reduce{g}{\{g_1, g_2, \dots, g_k\}}$ ($g$ generalises the reduced $S$-polynomial). The fact that $g\in\GIdeal{G_1}$ and $\reduce{g}{G_1}=0$ implies that $\LM(g)$ is a linear combination of monomials that are less than $\LM(g)$ (in the \grlex order) and hence must be in $B(G_2)$, i.e 
$$\reduce{g}{G_1}=0 \implies \reduce{\LM(g)}{G_1}=\sum_i k_i\reduce{b_i}{G_1}$$
where every $b_i\in B(G_2)$ and $b_i<_\grlex \LM(g)$. When the algorithm runs for $q=\LM(g)$, since $q$ was not added to $\LM(G_2)$,
$$\reduce{\LM(g)}{G_1}= \sum_j k_j\reduce{b_j}{G_1} + f$$
where $f$ is the longest Boolean term of $\reduce{\LM(g)}{G_1}$ which does not appear in any previous element of $C$. But the two equations above imply that $\sum_i k_i\reduce{b_i}{G_1} = \sum_j k_j\reduce{b_j}{G_1} + f$, which proves that there exists some $b_l\in B(G_2)$ such that $c_l$ has $f$ as its longest Boolean term, so $f$ should have appeared in $c_l$, a contradiction. Therefore the output is a $d$-truncated \GB basis. Although unnecessary for the $\IMP_d$, we also prove that the output is reduced: every non leading monomial of every polynomial in the output comes from $B(G_2)$ and no leading monomial is a multiple of another by construction (see Termination check).
\end{proof}
Thus we have proof of the main theorem and corollary (see \cref{thm:MainTheorem} and \cref{cor:MainCorollary}).

\section{An example}\label{sec:example}
We provide a simple example in \cref{tab:example} where we convert the reduced \GB basis in \lex order of a combinatorial ideal to one in \grlex order. Consider the problem formulated by the following (mod 2) equations: $x_1 \oplus x_3 \oplus x_4 = 0$ and $x_2 \oplus x_3 \oplus x_5 \oplus 1 = 0$. The example is explained in more detail in \cref{sec:Example in detail}.

\begin{table}[ht]
    \centering
    \begin{tabular}{c|c|c|c|c}
    \#  &  $q$ & $B(G_2)$ & $C$ & $G_2$\\
    \hline
    0 &  -  & 1 & $1$ & $\emptyset$ \\
    1 & $x_5$ & $x_5$ & $x_5$ & - \\
    2 & $x_4$ & $x_4$ & $x_4$ & -\\
    3\label{seehere} & $x_3$ & $x_3$ & $x_3$ & -\\
    4 & $x_2$ & $x_2$ & $x_3 \oplus x_5 \oplus 1$ & -\\
    5 & $x_1$ & $x_1$ & $x_3 \oplus x_4 $ & -\\
    6 & $x_5^2$ & - & - & $x_5^2-x_5$\\
    %7 & $x_4x_5$ & $x_4x_5$ & $\frac{1}{2}[\reduce{x_4}{G_1}+\reduce{x_5}{G_1}-(x_4 \oplus x_5)]$ & -\\
    7 & $x_4x_5$ & $x_4x_5$ & $\frac{1}{2}[\reduce{x_4}{G_1}+\reduce{x_5}{G_1}$ & -\\
     &  &  & $-(x_4 \oplus x_5)]$ & \\
    8 & $x_4^2$ & - & - & $x_4^2-x_4$\\
    9 & $x_3x_5$ & - & - & $x_3x_5-\frac{1}{2}[x_2+x_3+x_5-1]$\\
    10 & $x_3x_4$ & - & - & $x_3x_4 - \frac{1}{2}[-x_1+x_3+x_4]$\\
    11 & $x_3^2$ & - & - & $x_3^2-x_3$\\
    12 & $x_2x_5$ & -  & - & $x_2x_5 - \frac{1}{2}[x_2 + x_3 + x_5 - 1]$\\
    13 & $x_2x_4$ & $x_2x_4$ & $\frac{1}{2}[{x_2}|_{G_1} + {x_4}|_{G_1}$ & -\\
     &  &  & $ - (x_3\oplus x_4 \oplus x_5 \oplus 1) ]$ & \\
    14 & $x_2x_3$ & - & - & $x_2x_3-\frac{1}{2}[x_2+x_3+x_5-1]$\\
    15 & $x_2^2$ & - & - & $x_2^2-x_2$\\
    16 & $x_1x_5$ & - & - & $x_1x_5+x_2x_4-\frac{1}{2}[x_1+x_2+x_4+x_5-1]$\\
    17 & $x_1x_4$ & - & - & $x_1x_4-\frac{1}{2}[x_1-x_3+x_4]$\\
    18 & $x_1x_3$ & - & - & $x_1x_3-\frac{1}{2}[x_1+x_3-x_4]$\\
    19 & $x_1x_2$ & - & - & -\\
    20 & $x_1^2$ & - & - & $x_1^2-x_1$\\

\end{tabular}
    \caption{Example}
    \label{tab:example}
\end{table}

\section{Conclusion}
The $\IMP_d$ tractability for combinatorial ideals has useful practical applications as it implies bounded coefficients in Sum-of-Squares proofs.
A dichotomy result between ``hard'' (NP-hard) and ``easy'' (polynomial time) $\IMP$s was recently achieved for the $\IMP_0$ ~\cite{Bulatov17,Zhuk17} over the finite domain nearly thirty years after that over the Boolean domain \cite{Schaefer78}. The $\IMP_d$ for $d=O(1)$ over the Boolean domain was tackled by Mastrolilli \cite{MonaldoMastrolilli2019} based on the classification of the $\IMP$ through polymorphisms, where the complexity of the $\IMP_d$ for five of six polymorphisms was solved. We solve the remaining problem, i.e. the complexity of the $\IMP_d(\Gamma)$ when $\Gamma$ is closed under the ternary minority polymorphism. This is achieved by showing that the $d$-truncated reduced \GB basis can be computed in polynomial time, thus completing the missing link in the dichotomy result of \cite{MonaldoMastrolilli2019}. 

Moreover, we believe the techniques described in this paper can be generalized for a finite domain with prime $p$ elements, as constraints that are linear equations (mod $p$) are associated with an affine polymorphism \cite{Jeavons_UnifyingFramework1995}. We claim that the $\IMP_d$ is tractable for problems that are constrained as linear equations (mod $p$). This is a step in identifying the borderline of tractability, if it exists, for the general $\IMP_d$.
We believe that generalizing the dichotomy results of solvability of the IMP$_d$ for a finite domain is an interesting and challenging goal that we leave as an open problem.

{\small
\bibliographystyle{abbrv}
\bibliography{ref}
}
\newpage
\appendix

\section{Example (in detail)}\label{sec:Example in detail}
Note that $f_1 = x_3 \oplus x_4$, $f_2 = x_3 \oplus x_5 \oplus 1$, $f_3 = x_3$, $f_4 = x_4$ and $f_5 = x_5$. The reduced \GB basis in the \lex order is $G=G_1=\{x_1-M(f_1), x_2 - M(f_2)$, $dom(x_3)$, $dom(x_4)$, $dom(x_5)\}$. We start with $G_2=\LM(G_2)=\emptyset$, $B(G_2)=C=\{1\}$ (so $b_1=c_1=1$) and $q=x_5$. For the problem of $d=2$, we have
\begin{align*}
    Q = \{x_5,x_4,x_3,x_2,x_1,x_5^2, x_5x_4, x_4^2, x_5x_3,& x_4x_3, x_3^2, x_5x_2, x_4x_2, x_3x_2, x_2^2, x_5x_1, x_4x_1,\\ 
    &x_3x_1, x_2x_1, x_1^2\}.
\end{align*}

\phantomsection\label{ex:deg1}
We start with $q=x_5$ and since ${q}|_{G_1} = x_5$ and does not appear as the longest Boolean term of any element of $C$, we have that $x_5$ is added to $C$ (so $c_2=x_5$) and $x_5$ is added to $B(G_2)$ (so $b_2=x_5$).
The \textsf{Termination check} of the algorithm deletes $x_5$ from $Q$ and \textsf{Next monomial} chooses $q=x_4$.
The iterations are similar for $q=x_4$ and $q=x_3$, so we have $b_3=c_3=x_4$ and $b_4=c_4=x_3$ and $x_4,x_3$ are deleted from $Q$. When \textsf{Next monomial} chooses $q=x_2$, we have ${q}|_{G_1} = f_2 = (x_3\oplus x_5 \oplus 1)$, and since the Boolean term does not appear in any $c\in C$, we add $(x_3\oplus x_5 \oplus 1)$ to $C$ (so $c_5=(x_3\oplus x_5 \oplus 1)$) and $x_2$ to $B(G_2)$ (so $b_5=x_2$).
For similar reasons, when $q=x_1$, we add $c_6= (x_3\oplus x_4) $ to $C$ and $b_6=x_1$ to $B(G_2)$.

After the 5-th iteration (see \cref{tab:example}) is complete, we only have degree-two monomials in $Q$. \textsf{Next monomial} chooses $q=x_5^2$ and ${q}|_{G_1} = x_5$. Since $c_1=x_5$, $x_5$ appears as a Boolean term in $c_1$. Since the longest Boolean term appears already in $C$, ${q}|_{G_1}$ must be a linear combination of existing $\reduce{b_i}{G_1}$'s. That is to say, ${x_5^2}|_{G_1}=c_1 = \reduce{b_1}{G_1}={x_5}|_{G_1} \implies \reduce{x_5^2}{G_1}=\reduce{x_5}{G_1}$, so the polynomial $x_5^2-x_5$ is added to $G_2$. \textsf{Termination check} adds $x_5^2$ to $\LM(G_2)$ and deletes $x_5^2$ from $Q$. 

\textsf{Next monomial} chooses $q=x_5x_4$, so 
\begin{align*}
    {x_5x_4}|_{G_1}&= f_4\cdot f_5 =\frac{1}{2}[x_4+x_5-(x_4 \oplus x_5)]=\frac{1}{2}[\reduce{x_4}{G_1}+\reduce{x_5}{G_1}-(x_4 \oplus x_5)].
\end{align*}
The longest Boolean term of ${q}|_{G_1}$ is $(x_4\oplus x_5)$ which does not appear in any $c\in C$, so
$c_7 = 1/2[\reduce{x_4}{G_1}+\reduce{x_5}{G_1}-(x_4 \oplus x_5)]$ is added to $C$ and $b_7 = x_5x_4$ is added to $B(G_2)$.
\textsf{Next monomial} chooses $q=x_4^2$, this is similar to the case when $q=x_5^2$, we see that when $q=x_4^2$, and $x_4^2-x_4$ is added to $G_2$ and $x_4^2$ to $\LM(G_2)$. When \textsf{Next monomial} chooses $q=x_3x_5$ we have
\begin{align*}
    {x_3x_5}|_{G_1}&= f_3\cdot f_5=\frac{1}{2}[x_3+x_5-(x_3 \oplus x_5)].
    %&=\frac{1}{2}[\reduce{x_4}{G_1}+\reduce{x_5}{G_1}-(x_4 \oplus x_5)].
\end{align*}
Note that $(x_3\oplus x_5 \oplus 1)$ appears in $c_5\in C$. We use the fact that $(f\oplus 1) = 1 - f$ (see \textsf{Main loop}), and we have
\begin{align*}
    {x_3x_5}|_{G_1}&= \frac{1}{2}[x_3+x_5-(x_3 \oplus x_5)] =\frac{1}{2}[\reduce{x_3}{G_1}+\reduce{x_5}{G_1}-(1 - (x_3 \oplus x_5\oplus 1))]\\
    &= \frac{1}{2}[\reduce{x_2}{G_1}+\reduce{x_3}{G_1}+\reduce{x_5}{G_1}-\reduce{1}{G_1}]
    %&=\frac{1}{2}[\reduce{x_4}{G_1}+\reduce{x_5}{G_1}-(x_4 \oplus x_5)].
\end{align*}
and thus $x_3x_5-\frac{1}{2}[x_2+x_3+x_5-1]$ is added to $G_2$ and $x_3x_5$ to $\LM(G_2)$. The rest of the polynomials in $B(G_2),G_2,C$ are as shown in \cref{tab:example}. It can be seen that after the 20-th iteration, $Q$ becomes empty and \textsf{Termination check} halts the algorithm. This gives the 2-truncated reduced \GB basis $G_2$ of the combinatorial ideal. Note that this is in fact the reduced \GB basis in its entirety for this example (see \textsf{Termination check}).

\section{Ideals, Varieties and Constraints}\label{sect:background}
%\subsection{Ideals, Varieties and Constraints.}\label{sect:background}
Let $\Field$ denote an arbitrary field (for the applications of this paper $\Field=\Real$). Let $\Field[x_1, \ldots, x_n]$ be the ring of polynomials over a field $\Field$ and indeterminates $x_1,\ldots, x_n$. Let $\Field[x_1, \ldots, x_n]_d$ denote the subspace of polynomials of degree at most $d$.%The most commonly used fields will be: $\Real$, $\Complex$, $\Int$ and $\Rational$.
%%%%%%%%%%%%%%%%%%%%%%
%\paragraph{Ideals.}
%%%%%%%%%%%%%%%%%%%%%%
\begin{definition}\label{def:ideal}
The ideal (of $\Field[x_1,\ldots,x_n]$) generated by a finite set of polynomials $\{f_1,$ $\dots, f_m\}$ in $\Field[x_1,\ldots,x_n]$ is defined as
$$\Ideal{ f_1,\ldots,f_m}\mydef \left\{\sum_{i=1}^m t_i f_i\ |\ t_1,\ldots,t_m\in \Field[x_1,\ldots,x_n]\right\}.$$
The set of polynomials that vanish in a given set $S\subset \Field^n$ is called the \textbf{\emph{vanishing ideal}} of $S$ and denoted:
$\Ideal{S}\mydef \{f\in \Field[x_1,\ldots,x_n]: f(a_1,\ldots,a_n)=0 \  \forall (a_1,\ldots,a_n)\in S\}$.
\end{definition}
\begin{definition}
An ideal $\I$ is \textbf{\emph{radical}} if $f^m \in\I$ for some integer $m\geq 1$ implies that $f\in \I$.
\end{definition}
Another common way to denote $\Ideal{ f_1,\ldots,f_m}$ is by $\langle f_1,\ldots,f_m \rangle$ and we will use both notations interchangeably.
\begin{definition}
%\end{definition}
%
%\begin{definition}
 Let $\{f_1,\ldots, f_m\}$ be a finite set of polynomials in $\Field[x_1,\ldots,x_n]$. We call
$\Variety{ f_1,\ldots,f_m}\mydef \{(a_1,\ldots,a_n)\in \Field^n|$ $ f_i(a_1,$ $\ldots,a_n)=0 \quad 1\leq i\leq m\}$
the \textbf{\emph{affine variety}} defined by $f_1,\ldots, f_m$.
\end{definition}
\begin{definition}\label{def:V(I)}
  Let $\I\subseteq \Field[x_1,\ldots,x_n]$ be an ideal. We will denote by $\Variety{\I}$ the set $\Variety{\I}=\{(a_1,\ldots,a_n)\in \Field^n| f(a_1,\ldots,a_n)=0 \quad \forall f\in \I\}$.
\end{definition}

\begin{theorem}[\cite{Cox}, Th.15, p.196]\label{th:ideal_intersection}
  If $I$ and $J$ are ideals in $\Field[x_1,\ldots,$ $x_n]$, then $\Variety{I\cap J}= \Variety{I}\cup \Variety{J}$.
\end{theorem}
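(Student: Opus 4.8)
The statement I need to prove is \cref{th:ideal_intersection}: for ideals $I,J \subseteq \Field[x_1,\ldots,x_n]$, we have $\Variety{I\cap J} = \Variety{I}\cup \Variety{J}$. My plan is to prove the two inclusions separately, since this is the standard route for set equalities arising from an order-reversing correspondence between ideals and varieties.

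First I would establish $\Variety{I}\cup \Variety{J} \subseteq \Variety{I\cap J}$. This direction follows immediately from monotonicity of $\vb{V}(\cdot)$: since $I\cap J \subseteq I$ and $I\cap J \subseteq J$, any point vanishing on all of $I$ (or all of $J$) certainly vanishes on the smaller set $I\cap J$, so $\Variety{I}\subseteq \Variety{I\cap J}$ and likewise $\Variety{J}\subseteq \Variety{I\cap J}$; taking the union gives the claim. I would spell out monotonicity directly from \cref{def:V(I)} if the paper has not already recorded it.

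Next I would establish the reverse inclusion $\Variety{I\cap J} \subseteq \Variety{I}\cup \Variety{J}$, which is the only part with any content. Let $a = (a_1,\ldots,a_n) \in \Variety{I\cap J}$ and suppose for contradiction that $a \notin \Variety{I}\cup\Variety{J}$. Then there exist $f \in I$ with $f(a)\neq 0$ and $g \in J$ with $g(a)\neq 0$. The key observation is that the product $fg$ lies in $I\cap J$: it lies in $I$ because $I$ is an ideal and $f\in I$ (absorbing the factor $g$), and symmetrically it lies in $J$. Hence $(fg)(a) = f(a)g(a) = 0$ since $a\in\Variety{I\cap J}$, contradicting $f(a)\neq 0$ and $g(a)\neq 0$ (here I use that $\Field$ is a field, so it has no zero divisors). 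Therefore $a\in\Variety{I}$ or $a\in\Variety{J}$, which is what we wanted.

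Combining the two inclusions yields $\Variety{I\cap J} = \Variety{I}\cup \Variety{J}$. The only ``obstacle'' worth flagging is making sure the product $fg$ genuinely lands in $I\cap J$ rather than merely in the product ideal $IJ$ or the sum — but this is exactly the absorption property of ideals applied on each side, so it is routine; there is no deeper difficulty here. I would present the argument in exactly the order above, since the easy inclusion sets up the right intuition (smaller ideal, larger variety) before the contradiction argument for the hard inclusion.
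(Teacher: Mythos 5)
Your proof is correct. The paper does not prove this statement itself but merely cites it from \cite{Cox} (Theorem~15, p.~196), so there is no in-paper argument to compare against; your two-inclusion argument (monotonicity of $\vb{V}(\cdot)$ for the easy direction, and the product $fg \in I\cap J$ together with $\Field$ having no zero divisors for the hard direction) is precisely the standard textbook proof given in that reference.
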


% \begin{definition}\label{def:ideal_sum}
%   If $I$ and $J$ are ideals of the ring $\Field[x_1,\ldots,x_n]$, then the sum of $I$ and $J$, denoted by $I+J$, is the set $I+J=\{f+g|f\in I \textrm{ and } g\in J\}$.
% \end{definition}

% \begin{corollary}[\cite{Cox}, Cor.3, p.190] \label{cor:ideal_sum}
%     If $f_1,\dots,f_r \in \Field[x_1,\ldots,x_n]$, then $\langle f_1,\dots,f_r \rangle = \langle f_1 \rangle + \dots + \langle f_r \rangle$.
 
% \end{corollary}

% \begin{theorem}[\cite{Cox}, Th.4, p.190]\label{th:ideal_sum}
%   If $I$ and $J$ are ideals in $\Field[x_1,\ldots,$ $x_n]$, then $\Variety{I+J}= \Variety{I}\cap \Variety{J}$.
% \end{theorem}

\subsection{The Ideal-CSP Correspondence}\label{sect:idealCSP}
%\subsubsection{The Ideal-$\CSP$ Correspondence.}\label{sect:idealCSP} Constraints are in essence varieties, see e.g.~\cite{vandongenPhd,JeffersonJGD13}. %Following~\cite{vandongenPhd}, we shall translate CSPs to polynomial ideals and back.
Indeed, let $\Cc=(X,D,C)$ be an instance of the $\CSP(\Gamma)$ (see ~\cref{def:csp}). Without loss of generality, we shall assume that $D\subset \N$ and $D\subseteq \Field$.

Let $Sol(\Cc)$ be the (possibly empty) set of all feasible solutions of $\Cc$.
In the following, we map $Sol(\Cc)$ to an ideal $\I_\Cc\subseteq \Field[X]$ such that $Sol(\Cc)=\Variety{\I_\Cc}$.

Let $Y=(x_{i_1},\ldots,x_{i_k})$ be a $k$-tuple of variables from $X$
and let $R(Y)$ be a non empty constraint from $C$.
In the following, we map $R(Y)$ to a generating system of an ideal such that the projection of the variety of this ideal onto $Y$ is equal to $R(Y)$ (see~\cite{vandongenPhd} for more details).

Every $v=(v_1,\ldots,v_k)\in R(Y)$ corresponds to some point $v\in \Field^k$. It is easy to check~\cite{Cox} that $\Ideal{\{v\}}= \GIdeal{x_{i_1}-v_{1},\ldots,x_{i_k}-v_{k}}$, where $\GIdeal{x_{i_1}-v_{1},\ldots,x_{i_k}-v_{k}}\subseteq \Field[Y]$ is radical.
By \cref{th:ideal_intersection}, we have
\begin{align}\label{eq:constr=var}
 R(Y)=\bigcup_{v\in R(Y)} \Variety{\Ideal{\{v\}}}=\Variety{\I_{R(Y)}}
 \qquad \text{where } \I_{R(Y)} = \bigcap_{v\in R(Y)} \Ideal{\{v\}},
 %  \begin{split}
% R(Y)&=\bigcup_{v\in R(Y)} \Variety{\Ideal{\{v\}}}=\Variety{\I_{R(Y)}}\\
% & \qquad \text{where } \I_{R(Y)} = \bigcap_{v\in R(Y)} \Ideal{\{v\}},
%  \end{split}
\end{align}
where $\I_{R(Y)}\subseteq \Field[Y]$ is zero-dimensional and radical ideal since it is the intersection of radical ideals (see~\cite{Cox}, Proposition~16, p.197). Equation~\eqref{eq:constr=var} states that constraint $R(Y)$ is a variety of $\Field^k$. It is easy to find a generating system for $\I_{R(Y)}$:
 \begin{align}\label{eq:genIConstr}
 %\begin{split}
\I_{R(Y)}=\langle\prod_{v\in R}(1-\prod_{j=1}^{k}\delta_{v_j}(x_{i_j})),\prod_{j\in D}(x_{i_1}-j),
   \ldots,\prod_{j\in D}(x_{i_k}-j)\rangle,
 %\end{split}
 \end{align}
where $\delta_{v_j}(x_{i_j})$ are indicator polynomials, i.e. equal to one when $x_{i_j}=v_j$ and zero when $x_{i_j}\in D\setminus\{v_j\}$; polynomials $\prod_{j\in D}(x_{i_k}-j)$ force variables to take values in $D$ and will be denoted as {\emph{domain polynomials}}.

The smallest ideal (with respect to inclusion) of $\Field[X]$ containing $\I_{R(Y)}\subseteq \Field[\x]$ will be denoted $\I_{R(Y)}^{\Field[X]}$ and it is called the $\Field[X]$-module of $\I$. The set $Sol(\Cc)\subset \Field^n$ of solutions of $\Cc=(X,D,C)$ is the intersection of the varieties of the constraints:
\begin{align}
  Sol(\Cc) & = \bigcap_{R(Y)\in C}\Variety{\I_{R(Y)}^{\Field[X]}}=\Variety{\I_C} \label{eq:solVar},\\
  \I_\Cc&=\sum_{R(Y)\in C}\I_{R(Y)}^{\Field[X]}.\label{eq:IC}
\end{align}

The following properties follow from Hilbert's Nullstellensatz. %A simple direct proof is given in Section~\ref{sect:hilbert01}.
\begin{theorem}\label{th:nullstz}
Let $\Cc$ be an instance of the $\CSP(\Gamma)$ and $\I_\Cc$ defined as in~\eqref{eq:IC}. Then
  \begin{align}
    & \text{(Weak Nullstellensatz)} \label{eq:weak_nstz}\\ \nonumber
    &\Variety{\I_\Cc}=\emptyset \Leftrightarrow 1\in \Ideal{\I_\Cc} \Leftrightarrow \I_\Cc=\Field[X],  \\
    &\label{eq:strong_nstz} \text{(Strong Nullstellensatz)}\\ \nonumber
    &\Ideal{\Variety{\I_\Cc}}=\sqrt{\I_\Cc},\\
    &\label{eq:ICradical} \text{(Radical Ideal)} \\ \nonumber
    &\sqrt{\I_\Cc}=\I_\Cc.
  \end{align}
\end{theorem}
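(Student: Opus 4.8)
The plan is to exploit the \emph{domain polynomials} that $\I_\Cc$ is forced to contain, which lets us bypass the fact that $\Field=\Real$ is not algebraically closed. Concretely, for every variable $x_i$ occurring in some constraint of $\Cc$ (we may assume without loss of generality that every variable does; otherwise add the unary full-domain constraint on $x_i$ to $C$), the generating system~\eqref{eq:genIConstr} of the corresponding constraint ideal contains the domain polynomial $g_i(x_i):=\prod_{j\in D}(x_i-j)$, so $g_i(x_i)\in\I_\Cc$ by~\eqref{eq:IC}. Since the elements of $D\subseteq\Field$ are pairwise distinct, each $g_i$ is a univariate polynomial that is squarefree and splits into distinct linear factors \emph{over $\Field$ itself}. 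This is the only nontrivial structural input; everything else is formal.

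First I would establish radicality~\eqref{eq:ICradical}. Consider the subideal $J:=\GIdeal{g_1(x_1),\dots,g_n(x_n)}\subseteq\I_\Cc$. Iterated application of the Chinese Remainder Theorem (the linear factors of each $g_i$ being pairwise comaximal) gives a ring isomorphism $\Field[X]/J\cong\prod_{a\in D^n}\Field$, a finite product of copies of the field $\Field$. Because $\I_\Cc\supseteq J$, the quotient $A:=\Field[X]/\I_\Cc$ is a quotient of $\prod_{a\in D^n}\Field$; since every quotient of a finite product of fields is again such a product indexed by a subset, we get $A\cong\prod_{a\in S}\Field$ for some $S\subseteq D^n$. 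A finite product of fields is reduced, so $A$ has no nonzero nilpotent, i.e. $\sqrt{\I_\Cc}=\I_\Cc$. Tracing through the isomorphism one sees $S=\Variety{\I_\Cc}$ (a point $a\in D^n$ lies in $\Variety{\I_\Cc}$ iff evaluation at $a$ factors through $A$ iff the $a$-th coordinate survives), which also re-derives $\Variety{\I_\Cc}=Sol(\Cc)$ from~\eqref{eq:solVar} and shows this set is finite. Alternatively one can simply quote the classical finite-variety Nullstellensatz (Seidenberg's radical-membership lemma).

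Next, the strong Nullstellensatz~\eqref{eq:strong_nstz}. Under $A\cong\prod_{a\in S}\Field$ the maximal ideals of $A$ are the kernels of the $|S|$ coordinate projections, and these pull back in $\Field[X]$ to the ideals $\GIdeal{x_1-a_1,\dots,x_n-a_n}$ for $a\in S$ (the composite $\Field[X]\to A\to\Field$ sends $x_i\mapsto a_i$). Hence $\I_\Cc=\bigcap_{a\in S}\GIdeal{x_1-a_1,\dots,x_n-a_n}=\Ideal{S}=\Ideal{\Variety{\I_\Cc}}$, and combining with $\sqrt{\I_\Cc}=\I_\Cc$ gives $\Ideal{\Variety{\I_\Cc}}=\sqrt{\I_\Cc}$. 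Equivalently, this is the general fact that the strong Nullstellensatz holds over \emph{any} field for an ideal containing a squarefree, completely split univariate polynomial in each variable, since then the $\Field$-points already exhaust the $\overline{\Field}$-points.

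Finally, the weak Nullstellensatz~\eqref{eq:weak_nstz} drops out: $1\in\I_\Cc\iff\I_\Cc=\Field[X]$ is immediate from the definition of an ideal, and $\Variety{\I_\Cc}=\emptyset\iff S=\emptyset\iff A$ is the zero ring $\iff\I_\Cc=\Field[X]$ (the converse direction being trivial since $1$ vanishes nowhere); one can also read it off~\eqref{eq:strong_nstz} via $\I_\Cc=\Ideal{\Variety{\I_\Cc}}=\Ideal{\emptyset}=\Field[X]$. The main obstacle is purely conceptual: the textbook Hilbert Nullstellensatz requires an algebraically closed field, and the whole proof hinges on realizing that the domain polynomials make $\Field[X]/\I_\Cc$ a finite product of copies of $\Real$, so that radicality and both Nullstellensätze become essentially bookkeeping; the one point requiring care is ensuring a domain polynomial for \emph{every} variable, handled by the WLOG reduction above.
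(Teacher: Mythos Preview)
Your proof is correct and considerably more detailed than the paper's own treatment. The paper does not give a formal proof of this theorem; it simply remarks that the result ``follows from a simple application of the celebrated and basic result in algebraic geometry known as Hilbert's Nullstellensatz,'' observing that the presence of domain polynomials forces all zeros to lie in $\Field$ already, so that algebraic closure is not needed. You instead give a self-contained argument via the Chinese Remainder Theorem: because each domain polynomial splits into distinct linear factors over $\Field$, the quotient $\Field[X]/\I_\Cc$ is (a quotient of) a finite product of copies of $\Field$, and radicality and both Nullstellens\"atze then drop out by identifying maximal ideals with points of $\Variety{\I_\Cc}$. The conceptual pivot---domain polynomials---is identical, but the technical routes differ: the paper invokes the classical Nullstellensatz as a black box and asserts that the finite-domain hypothesis removes the algebraically-closed requirement, whereas your CRT argument is elementary and never appeals to Nullstellensatz at all. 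Your WLOG step (adding full-domain constraints for variables not appearing in any constraint) is not made explicit in the paper, but it is implicitly assumed there too, since~\eqref{eq:solVar} already fails without a domain polynomial for every variable.
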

\cref{th:nullstz} follows from a simple application of the celebrated and basic result in algebraic geometry known as Hilbert's Nullstellensatz. In the general version of Nullstellensatz it is necessary to work in an algebraically closed field and take a radical of the ideal of polynomials. In our special case it is not needed due to the presence of domain polynomials. Indeed, the latter implies that we know a priori that the solutions must be in $\Field$ (note that we are assuming $D\subseteq \Field$).

\subsection{\GB bases.}\label{sect:GBbasics}
%%%%%%%%%%%%%%%%%%%%%%%%%%%%%%%%%%%%
% The most commonly used fields will be: $\Real$, $\Complex$, $\Int$ and $\Rational$.
%We will use the following terminology (see \cite{Cox:2015}).
In this section we suppose a fixed monomial ordering $>$ on $\Field[x_1,\ldots,x_n]$ (see~\cite{Cox}, Definition 1, p.55), which will not be defined explicitly.
We can reconstruct the monomial $x^\alpha=x_1^{\alpha_1}\cdots x_n^{\alpha_n}$ from the $n$-tuple of exponents $\alpha =(\alpha_1,\ldots,\alpha_n)\in \Zz^n_{\geq0}$. This establishes a one-to-one correspondence between the monomials in $\Field[x_1,\ldots,x_n]$ and $\Zz^n_{\geq0}$. Any ordering $>$ we establish on the space $\Zz^n_{\geq0}$
will give us an ordering on monomials: if $\alpha > \beta$ according to this ordering, we will also say that $x^\alpha > x^\beta$. The two monomial orderings that we use in this paper are the lexicographic order $>_\lex$ and the graded lexicographic ordering $>_\grlex$. 

\begin{definition}\label{def:lex and grlex} Let $\alpha =(\alpha_1,\ldots,\alpha_n),\beta=(\beta_1,\ldots,\beta_n)\in \Zz^n_{\geq0}$ and $|\alpha| = \sum_{i=1}^n\alpha_i$, $|\beta| = ~\sum_{i=1}^n\beta_i$.
  \begin{enumerate}[(i)]
      \item We say $\alpha>_\lex \beta$ if, in the vector difference $\alpha -\beta \in \Zz^n$, the left most nonzero entry is positive. We will write $x^\alpha>_\lex x^\beta$ if $\alpha>_\lex \beta$. 
      \item We say $\alpha>_\grlex \beta$ if $|\alpha| >|\beta|$, or $|\alpha| =|\beta|$ and $\alpha>_\lex \beta$.

  \end{enumerate}
\end{definition}

\begin{definition}
  For any $\alpha=(\alpha_1,\cdots,\alpha_n)\in \Zz^n_{\geq0}$ let $x^\alpha\mydef \prod_{i=1}^{n}x_i^{\alpha_i}$. Let $f= \sum_{\alpha} a_{\alpha}x^\alpha$ be a nonzero polynomial in $\Field[x_1,\ldots,x_n]$ and let $>$ be a monomial order.
  \begin{enumerate}[(i)]
    
    \item The \textbf{\emph{multidegree}} of $f$ is $\multideg(f)\mydef \max(\alpha\in \Zz^n_{\geq0}:a_\alpha\not = 0)$.
    \item The \textbf{\emph{degree}} of $f$ is deg$(f)=|\multideg(f)|$. In this paper, this is always according to \grlex order.
    \item The \textbf{\emph{leading coefficient}} of $f$ is $\LC(f)\mydef a_{\multideg(f)}\in \Field$.
    \item The \textbf{\emph{leading monomial}} of $f$ is $\LM(f)\mydef x^{\multideg(f)}$ (with coefficient 1).
    \item The \textbf{\emph{leading term}} of $f$ is $\LT(f)\mydef \LC(f)\cdot \LM(f)$.
  \end{enumerate}
\end{definition}
%In order to compute the \GB basis we will use the following known results and definitions. We refer to standard books (see e.g. \cite{becker1993grobner,Cox:2007}) for full details.
%

The concept of \emph{reduction}, also called \emph{multivariate division} or \emph{normal form computation}, is central to \GB basis theory. It is a multivariate generalization of the Euclidean division of univariate polynomials.

\begin{definition}\label{def:reduction}
Fix a monomial order and let $G=\{g_1,\ldots,g_t\}\subset \Field[x_1,\ldots,x_n]$. Given $f\in \Field[x_1,\ldots,x_n]$, we say that \emph{\textbf{$f$ reduces to $r$ modulo $G$}}, written
$f\rightarrow_G r$,
if $f$ can be written in the form
$f=A_1g_1+\dots+A_t g_t+r$ for some $A_1,\ldots,A_t,r\in \Field[x_1,\ldots,x_n]$,
such that:
\begin{enumerate}[(i)]
  \item No term of $r$ is divisible by any of $\LT(g_1),\ldots,\LT(g_t)$.
  \item Whenever $A_i g_i\not=0$, we have $\multideg(f)\geq \multideg(A_ig_i)$.
\end{enumerate}
The polynomial remainder $r$ is called a \emph{\textbf{normal form of $f$ by $G$}} and will be denoted by $f|_G$.
\end{definition}

A normal form of $f$ by $G$, i.e. $f|_G$, can be obtained by repeatedly performing the following until it cannot be further applied: choose any $g\in G$ such that $\LT(g)$ divides some term $t$ of $f$ and replace $f$ with $f-\frac{t}{\LT(g)}g$. Note that the order we choose the polynomials $g$ in the division process is not specified.

In general a normal form $f|_G$ is not uniquely defined.
Even when $f$ belongs to the ideal generated by $G$, i.e. $f\in \Ideal{G}$, it is not always true that $f|_G=0$.
\begin{example}
  Let $f=xy^2-y^3$ and $G=\{g_1,g_2\}$, where $g_1=xy-1$ and $g_2=y^2-1$. Consider the graded lexicographic order (with $x>y$) and note that
 % \begin{align*}
    $f = y\cdot g_1 - y\cdot g_2 + 0$ and
    $f = 0\cdot g_1 + (x-y)\cdot g_2 + x-y$.
%  \end{align*}
 % From the first equation we see that the corresponding normal form of $f$ by $G$ is zero and therefore $f\in \Ideal{G}$. The normal form of $f$ by $G$ in the second equation is not zero.
\end{example}
This non-uniqueness is the starting point of \GB basis theory.
\begin{definition}
Fix a monomial order on the polynomial ring $\Field[x_1,\ldots,x_n]$. A finite subset $G = \{g_1,\ldots, g_t\}$ of an ideal $\I \subseteq \Field[x_1,\ldots,x_n]$ different from $\{0\}$ is said to be a \emph{\textbf{\GB basis}} (or \emph{\textbf{standard basis}}) if
$\langle \LT(g_1),\ldots, \LT(g_t)\rangle = \langle \LT(\I)\rangle$, where we denote by $\langle \LT(\I)\rangle$ the ideal generated by the elements of the set $\LT(\I)$ of leading terms of nonzero elements of $\I$.
\end{definition}
\begin{definition}\label{def:redGB}
A \textbf{\emph{reduced \GB basis}} for a polynomial ideal $\I$ is a \GB basis $G$ for $\I$ such that:
\begin{enumerate}[(i)]
  \item $\LC(g)= 1$ for all $g \in G$.
  \item For all $g \in G$, $g$ cannot reduce any other polynomial from $G$, i.e $\reduce{f}{g}=f$ for every $f\in G\setminus\{g\}$.% no monomial of $g$ lies in $\GIdeal{\LT(G\setminus \{g\})}$.
\end{enumerate}
\end{definition}
It is known (see~\cite{Cox}, Theorem~5, p.93) that for a given monomial ordering, a polynomial ideal $\I\not=\{0\}$ has a reduced \GB basis (see Definition~\ref{def:redGB}), and the reduced \GB basis is unique. 
\begin{proposition}[\cite{Cox}, Proposition~1, p.83]\label{th:gbprop}
Let $\I\subset \Field[x_1,\dots,x_n]$ be an ideal and let $G=\{g_1,\ldots,g_t\}$ be a \GB basis for $\I$. Then given $f\in \Field[x_1,\dots,x_n]$, $f$ can be written in the form
$f=A_1g_1+\dots+A_t g_t+r$ for some $A_1,\ldots,A_t,r\in \Field[x_1,\ldots,x_n]$,
such that:
\begin{enumerate}[(i)]
  \item No term of $r$ is divisible by any of $\LT(g_1),\ldots,\LT(g_t)$.
  \item Whenever $A_i g_i\not=0$, we have $\multideg(f)\geq \multideg(A_ig_i)$.
  \item There is a unique $r\in \Field[x_1,\dots,x_n]$.
\end{enumerate}
In particular, $r$ is the remainder on division of $f$ by $G$ no matter how the elements of $G$ are listed when using the division algorithm.%\footnote{For the description of the division algorithm we refer to~\cite{Cox:2015}, p.61.}.
\end{proposition}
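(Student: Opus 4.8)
The plan is to prove existence and uniqueness separately: existence comes straight from the multivariate division algorithm and does not use that $G$ is a \GB basis, while uniqueness is exactly where the defining property of a \GB basis enters.

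\textbf{Existence.} First I would run the multivariate division algorithm on $f$ against the ordered list $g_1,\dots,g_t$: repeatedly locate a term of the current dividend divisible by some $\LT(g_i)$, cancel it by subtracting the appropriate monomial multiple of $g_i$ (updating $A_i$), and when no such term exists, peel the leading term off into the accumulated remainder $r$. I would check two things. First, the process terminates: every subtraction step replaces the current dividend $h$ by one of strictly smaller $\multideg$, and peeling a leading term into $r$ also strictly decreases the multidegree of the part still to be processed, so since a monomial order well-orders $\Zz^n_{\geq 0}$ the algorithm halts. Second, the bookkeeping gives $\multideg(A_i g_i)\leq\multideg(f)$ whenever $A_i g_i\neq 0$ (each monomial multiple subtracted has multidegree at most that of the term it cancels, which is at most $\multideg(f)$), and by construction no term of the output $r$ is divisible by any $\LT(g_i)$. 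This yields a representation $f=A_1g_1+\dots+A_tg_t+r$ meeting (i) and (ii).

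\textbf{Uniqueness.} Suppose $f=\sum_i A_ig_i+r=\sum_i A_i'g_i+r'$ are two such representations with $r,r'$ both satisfying (i). Then $r-r'=\sum_i(A_i'-A_i)g_i\in\GIdeal{g_1,\dots,g_t}=\I$. Assume for contradiction $r-r'\neq 0$. Then $\LT(r-r')\in\GIdeal{\LT(\I)}$, and because $G$ is a \GB basis, $\GIdeal{\LT(\I)}=\GIdeal{\LT(g_1),\dots,\LT(g_t)}$. Now I would invoke the monomial-ideal lemma: a monomial lying in an ideal generated by monomials is divisible by one of those generators. Applying it to $\LM(r-r')$, some $\LT(g_i)$ divides $\LT(r-r')$. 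But $\LT(r-r')$ is (up to sign) a term occurring in $r$ or in $r'$, and by (i) no term of $r$ or of $r'$ is divisible by any $\LT(g_i)$ — a contradiction. Hence $r=r'$, which is (iii).

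\textbf{The ``in particular'' clause.} Any execution of the division algorithm, for any listing order of $G$, outputs a representation of the stated form satisfying (i) and (ii); by the uniqueness just established, the remainder it returns does not depend on that order, so $\reduce{f}{G}$ is well defined for a \GB basis $G$.

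\textbf{Expected main obstacle.} The only genuine content is the monomial-ideal lemma (membership of a monomial in $\GIdeal{\LT(g_1),\dots,\LT(g_t)}$ forces divisibility by some $\LT(g_i)$), together with the fact that a \GB basis satisfies $\GIdeal{\LT(\I)}=\GIdeal{\LT(g_1),\dots,\LT(g_t)}$ by definition; everything else is routine bookkeeping around the division algorithm and the well-ordering property of the monomial order.
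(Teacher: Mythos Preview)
Your argument is correct and is exactly the standard proof one finds in the cited reference (Cox--Little--O'Shea). Note, however, that the paper does not actually give its own proof of this proposition: it is stated in the background appendix with a citation to \cite{Cox} and no further argument, so there is nothing in the paper to compare your proof against beyond observing that you have reproduced the textbook proof faithfully.
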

\begin{corollary}[\cite{Cox}, Corollary~2, p.84]\label{th:imp}
  Let $G=\{g_1,\ldots,g_t\}$ be a \GB basis for $\I\subseteq \Field[x_1,\dots,x_n]$ and let $f\in \Field[x_1,\dots,x_n]$. Then $f\in \I$ if and only if the remainder on division of $f$ by $G$ is zero.
\end{corollary}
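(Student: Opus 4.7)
The plan is to prove both directions of the biconditional $f\in \I \iff \reduce{f}{G}=0$ separately, isolating the single place where the Gr\"obner basis hypothesis is indispensable.

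For the direction $(\Leftarrow)$, I would invoke \cref{th:gbprop} directly: running the division algorithm on $f$ by $G$ produces $A_1,\dots,A_t\in \Field[x_1,\dots,x_n]$ such that $f=A_1 g_1+\cdots+A_t g_t+\reduce{f}{G}$. If $\reduce{f}{G}=0$, then $f$ is a polynomial combination of elements of $G\subseteq\I$, and hence $f\in\I$. This direction does not use the Gr\"obner basis property at all, and is true for any generating set.

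For the direction $(\Rightarrow)$, assume $f\in\I$ and again apply \cref{th:gbprop} to write $f=\sum_{i=1}^{t}A_i g_i+r$ with $r=\reduce{f}{G}$, where $r$ satisfies condition (i) of \cref{def:reduction}: no term of $r$ is divisible by any $\LT(g_i)$. Since every $g_i\in\I$, the first summand lies in $\I$, and therefore $r=f-\sum_i A_i g_i\in\I$ as well. The remaining task is to deduce $r=0$, and this is where the Gr\"obner basis hypothesis enters. Suppose for contradiction that $r\neq 0$. Then $\LT(r)\in \LT(\I)$, so $\LT(r)\in \GIdeal{\LT(\I)}$; because $G$ is a Gr\"obner basis, by definition $\GIdeal{\LT(\I)}=\GIdeal{\LT(g_1),\dots,\LT(g_t)}$. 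Consequently $\LT(r)$ lies in the monomial ideal generated by $\LT(g_1),\dots,\LT(g_t)$, and by the standard fact that a monomial belonging to a monomial ideal must be divisible by one of the monomial generators, some $\LT(g_i)$ divides $\LT(r)$. This contradicts property (i) of the division output, forcing $r=0$.

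The only non-routine ingredient is the monomial-ideal fact used in the contradiction step, which follows by expanding an arbitrary combination $\sum_j h_j \LT(g_j)$ and noting that every monomial appearing in the expansion is necessarily a multiple of some $\LT(g_j)$; this is where one genuinely uses that the generators are monomials rather than arbitrary polynomials. No other obstacle is expected, since apart from this lemma the argument is bookkeeping with the division identity guaranteed by \cref{th:gbprop} and the definition of a Gr\"obner basis via leading-term ideals.
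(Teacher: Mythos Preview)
Your argument is correct and is precisely the standard proof: the paper does not supply its own proof of this corollary but simply cites \cite{Cox}, and your write-up reproduces the textbook argument there (division identity from \cref{th:gbprop}, then the monomial-ideal membership contradiction via the defining equality $\GIdeal{\LT(\I)}=\GIdeal{\LT(g_1),\dots,\LT(g_t)}$). Nothing to add.
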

\begin{definition}\label{def:pdiv}
We will write $\reduce f F$ for the remainder of $f$ by the ordered $s$-tuple $F=(f_1,\ldots,f_s)$. If $F$ is a $\GB$ basis for $\spn{f_1,\dots,f_s}$, then we can regard $F$ as a set (without any particular order) by Proposition~\ref{th:gbprop}.
\end{definition}
%

%\begin{definition}
%A \textbf{remainder} of a polynomial $f\in \Field[x_1,\ldots,k_n]$ on division by set $F=\{f_1,\ldots,f_s\}$ of polynomials from $\Field[x_1,\ldots,k_n]$, denoted by $f|_F$, is computed by repeatedly performing the following until it cannot be further applied: Choose any $i \in \{1, . . . ,s\}$ such that $\LT(f_i)$ divides some term $\tau$ of $f$ and set $f := f - \frac{\tau}{\LT(f_i)}f_i$.
%\end{definition}
%
The ``obstruction'' to $\{g_1,\ldots, g_t\}$ being a \GB basis is the possible occurrence of polynomial combinations of the $g_i$ whose leading terms are not in the ideal generated by the $\LT( g_i)$. One way (actually the only way) this can occur is if the leading terms in a suitable combination cancel, leaving only smaller terms. The latter is fully captured by the so called $S$-polynomials that play a fundamental role in \GB basis theory.
\begin{definition}\label{def:spoly}
  Let $f,g\in \Field[x_1,\ldots,x_n]$ be nonzero polynomials.
  %\begin{enumerate}[(i)]
   % \item
    If $\multideg(f) =\alpha$ and $\multideg(g)= \beta$, then let $\gamma=(\gamma_1,\ldots,\gamma_n)$, where $\gamma_i = \max(\alpha_i,\beta_i)$ for each $i$. We call $x^\gamma$ the \emph{\textbf{least common multiple}} of $\LM(f)$ and $\LM(g)$, written $x^\gamma = \LCM(\LM(f),\LM(g))$.
    %\item
    The \emph{\textbf{$S$-polynomial}} of $f$ and $g$ is the combination $S(f,g) = \frac{x^\gamma}{\LT(f)}\cdot f - \frac{x^\gamma}{\LT(g)}\cdot g$.
    %\begin{align*}
%      S(f,g) &= \frac{x^\gamma}{\LT(f)}\cdot f - \frac{x^\gamma}{\LT(g)}\cdot g
%    \end{align*}
  %\end{enumerate}
\end{definition}
The use of $S$-polynomials to eliminate leading terms of multivariate polynomials generalizes the row reduction algorithm for systems of linear equations. If we take a system of homogeneous linear equations (i.e.: the constant coefficient equals zero), then it is not hard to see that bringing the system in triangular form yields a \GB basis for the system.
%%%%%%%%%%%%%%%%%%%%%%%%%%%%%%%%%%%%%%%%%%%%%%%%%%%%
\begin{theorem}[\textbf{Buchberger's Criterion}] (See e.g. \cite{Cox}, Theorem 3, p.105)\label{th:crit}
A basis $G=\{g_1,\ldots,g_t\}$ for an ideal $\I$ is a \GB basis if and only if $S(g_i,g_j)\rightarrow_G 0$ for all $i\not=j$.
\end{theorem}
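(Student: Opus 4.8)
The plan is to prove the two implications separately, the forward one being routine and the converse carrying all the weight. If $G$ is a \GB basis of $\I$, then each $S$-polynomial $S(g_i,g_j)$ is by definition a polynomial combination of $g_i$ and $g_j$, hence lies in $\I$; by \cref{th:imp}, division of any element of $\I$ by the \GB basis $G$ returns remainder $0$, so $S(g_i,g_j)\rightarrow_G 0$ for every $i\neq j$. For the converse, assume $S(g_i,g_j)\rightarrow_G 0$ for all $i\neq j$ and prove $\langle\LT(g_1),\ldots,\LT(g_t)\rangle=\langle\LT(\I)\rangle$. The inclusion $\subseteq$ is immediate, so fix a nonzero $f\in\I$; it suffices to exhibit some $\LT(g_i)$ dividing $\LT(f)$. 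Among all expressions $f=\sum_{i=1}^t h_i g_i$, pick one minimizing $\delta:=\max_i\multideg(h_i g_i)$ in the fixed monomial order (the minimum exists because a monomial order well-orders $\Zz^n_{\geq0}$), and observe $\multideg(f)\leq\delta$.

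I would then argue by contradiction that $\multideg(f)=\delta$. If $\multideg(f)<\delta$, the terms of multidegree $\delta$ in $\sum_i h_i g_i$ must cancel: writing $S=\{i:\multideg(h_i g_i)=\delta\}$ and $\LT(h_i)=c_i x^{\alpha_i}$ for $i\in S$, one gets $\sum_{i\in S}c_i\,x^{\alpha_i}\LT(g_i)=0$. The crux is the classical cancellation lemma (see \cite{Cox}, Ch.~2, \S6): a sum $\sum_{i\in S}c_i\,(x^{\alpha_i}g_i)$ of terms all of multidegree $\delta$ whose total has multidegree $<\delta$ equals a $\Field$-linear combination $\sum_{j,k} d_{jk}\,S(x^{\alpha_j}g_j,\,x^{\alpha_k}g_k)$, and each $S$-polynomial of such monomial multiples satisfies $S(x^{\alpha_j}g_j,x^{\alpha_k}g_k)=x^{\delta-\gamma_{jk}}S(g_j,g_k)$ with $x^{\gamma_{jk}}=\LCM(\LM(g_j),\LM(g_k))$, hence has multidegree $<\delta$. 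By hypothesis $S(g_j,g_k)=\sum_l A_{jkl}g_l$ with every $\multideg(A_{jkl}g_l)\leq\multideg(S(g_j,g_k))<\gamma_{jk}$; multiplying by $x^{\delta-\gamma_{jk}}$ rewrites each degree-$\delta$ contribution to $f$ through the $g_l$'s with all terms of multidegree $<\delta$. Folding these back into the $h_i$ yields a representation of $f$ with strictly smaller $\delta$, contradicting minimality. Hence $\multideg(f)=\delta=\multideg(h_{i_0}g_{i_0})$ for some $i_0$, so $\LM(g_{i_0})\mid\LM(f)$ and $\LT(f)\in\langle\LT(g_1),\ldots,\LT(g_t)\rangle$; since $f\in\I$ was arbitrary, $\langle\LT(\I)\rangle\subseteq\langle\LT(g_1),\ldots,\LT(g_t)\rangle$, so $G$ is a \GB basis.

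I expect the main obstacle to be the careful proof of the cancellation lemma and the surrounding multidegree bookkeeping: one must verify that substituting the $S$-polynomial relations introduces no new term of multidegree $\geq\delta$, so that the well-founded induction on $\delta$ really terminates. Everything else is formal manipulation of leading terms.
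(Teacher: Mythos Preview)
The paper does not supply its own proof of this theorem: it is stated in the appendix with a reference to \cite{Cox}, Theorem~3, p.~105, and used as a black box. Your proposal reproduces exactly the standard textbook argument found there---the forward direction via \cref{th:imp}, and the converse via a minimal-$\delta$ representation combined with the cancellation lemma that rewrites a degree-drop as a combination of monomial-shifted $S$-polynomials---so there is nothing to compare; your outline is correct and matches the cited source.
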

%%%%%%%%%%%%%%%%%%%%%%%%%%%%%%%%%%%%%%%%%%%%%%%%%%%%%%%%%%%%%%%%
%
By Theorem~\ref{th:crit} it is easy to show whether a given basis is a \GB basis. Indeed, if $G$ is a \GB basis then given $f\in \Field[x_1,\dots,x_n]$, $f|_G$ is unique and it is the remainder on division of $f$ by $G$, no matter how the elements of $G$ are listed when using the division algorithm.

Furthermore, Theorem~\ref{th:crit} leads naturally to an algorithm for computing \GB bases for a given ideal $\I=\langle f_1,\ldots,f_s \rangle$: start with a basis $G=\{f_1,\ldots,f_s\}$ and for any pair $f,g\in G$ with $S(f,g)|_G\not= 0$ add $S(f,g)|_G$ to $G$.
This is known as Buchberger's algorithm~\cite{BUCHBERGER2006475} (for more details see Algorithm~\ref{Algo:buchbergerAlgo} in Section~\ref{sect:buchbergerAlgo}).

Note that Algorithm~\ref{Algo:buchbergerAlgo} is non-deterministic and the resulting \GB basis in not uniquely determined by the input. This is because the normal form $S(f,g)|_G$ (see Algorithm~\ref{Algo:buchbergerAlgo}, line~\ref{eq:reminder}) is not unique as already remarked.
We observe that one simple way to obtain a deterministic algorithm (see \cite{Cox}, Theorem~2, p. 91) is to replace $h:=S(f,g)|_G$ in line~\ref{eq:reminder} with $h:=\reduce {S(f,g)} G$ (see Definition~\ref{def:pdiv}), where in the latter $G$ is an ordered tuple. However, this is potentially dangerous and inefficient. Indeed, there are simple cases where the combinatorial growth of set $G$ in Algorithm~\ref{Algo:buchbergerAlgo} is out of control very soon.

%%%%%%%%%%%%%%%%%%%%%%%%%%%%%%%%%%%%
%%%%%%%%%%%%%%%%%%%%%%%%%%%%%%%%%%%%%%%%%%%%%%%%%%%%%%%%%%%%%%%%%%%%%%%%%%%%%%%

%%%%%%%%%%%%%%%%%%%%%%%%%%%%%%%%%%%%%%%%%%%%%%%%%%%%%%%%%%%
%\section{}\label{sect:appendixA}
%%%%%%%%%%%%%%%%%%%%%%%%%%%%%%%%%%%%%%%%%%%%%%%%%%%%%%%%%%%

%%%%%%%%%%%%%%%%%%%%%%%%%%%%%%%%%%%
\subsubsection{Construction of \GB Bases.}\label{sect:buchbergerAlgo}
%%%%%%%%%%%%%%%%%%%%%%%%%%%%%%%%%%%%%
Buchberger's algorithm~\cite{BUCHBERGER2006475} can be formulated as in Algorithm~\ref{Algo:buchbergerAlgo}. %%%%%%%%%%%%%%%%%%%%%%%%%%%%%%%%
\begin{algorithm}
\caption{Buchberger's Algorithm}
\label{Algo:buchbergerAlgo}
\begin{algorithmic}[1]
\STATE \textbf{Input}: A finite set $F=\{f_1,\ldots,f_s\}$ of polynomials
\STATE \textbf{Output}: A finite \GB basis $G$ for $\spn{f_1,\ldots,f_s}$
\STATE $G:= F$\label{eq:startingG}
\STATE $C:= G\times G$
\WHILE{$C\not = \emptyset$}
\STATE Choose a pair $(f,g)\in C$
\STATE $C:=C\setminus \{(f,g)\}$
\STATE $h:=S(f,g)|_G$ \label{eq:reminder}
\IF {$h\not=0$}\label{eq:Buch_cond}
\STATE $C:= C\cup (G\times \{h\})$
\STATE $G:= G\cup \{h\}$\label{eq:addspoly}
\ENDIF
\ENDWHILE
\STATE Return G
\end{algorithmic}
\end{algorithm}
%%%%%%%%%%%%%%%%%%%%%%%%%%%%%%%%
The pairs that get placed in the set $C$ are often referred to as \emph{critical pairs}. Every newly added reduced $S$-polynomial enlarges the set $C$. If we use $h:=\reduce {S(f,g)} G$ in line~\ref{eq:reminder} then there are simple cases where the situation is out of control. This combinatorial growth can be controlled to some extent be eliminating unnecessary critical pairs.

\end{document}